\documentclass{article}
\usepackage{ijcai26}

\usepackage{times}
\usepackage{soul}
\usepackage{url}
\usepackage[hidelinks]{hyperref}
\usepackage[utf8]{inputenc}
\usepackage[small]{caption}
\usepackage{graphicx}
\usepackage{amsmath}
\usepackage{amsthm}
\usepackage{booktabs}
\usepackage{algorithm}
\usepackage{algorithmic}
\usepackage[switch]{lineno}
\usepackage{dsfont}
\usepackage{comment}

\usepackage{amsfonts,amssymb}
\usepackage{balance} 
\usepackage{bm} 
\usepackage{pgfplots} 

\newcommand{\axisWidth}{8cm}
\newcommand{\axisHeight}{6cm}
\newcommand{\legendFont}{\small}

\usepackage{mathrsfs}
\usepackage[mathscr]{eucal}   
\usepackage{mathalfa}         
\usepackage{ifthen}

\newcommand{\vect}[1]{\vec{\mathbf{#1}}}
\newcommand{\vzero}{\vect{0}}

\newcommand{\vn}{\vect{n}}

\newcommand{\vx}{\vect{x}}
\newcommand{\vy}{\vect{y}}
\newcommand{\vN}{\vect{N}}
\newcommand{\vX}{\vect{X}}
\newcommand{\vY}{\vect{Y}}

\newcommand{\mA}{\mathcal{A}}

\newcommand{\mJ}{\mathcal{J}}
\newcommand{\mP}{\mathcal{P}}
\newcommand{\mR}{\mathcal{R}}
\newcommand{\rankings}{\mR}
\newcommand{\rankingsprime}{\mR'}
\newcommand{\rankingswith}[1]{\mR_{#1}}
\newcommand{\rankingsprimewith}[1]{\mR'_{#1}}

\newcommand{\proba}{\mathbb{P}}
\newcommand{\mean}{\mathbb{E}}

\newcommand{\bigO}{\mathcal{O}}
\newcommand{\Reals}{\mathbb{R}}

\newcommand{\sets}{S}
\newcommand{\sett}{T}

\newcommand{\SCW}{\operatorname{SCW}}

\newcommand{\firstsetofmanipulators}{\mathcal{V}_1}
\newcommand{\secondsetofmanipulators}{\mathcal{V}_2}

\newcommand{\RestateLemmaNumber}[1]{%
  \setcounter{lemma}{\numexpr #1-1\relax}%
}
\newcommand{\RestateTheoremNumber}[1]{%
  \setcounter{theorem}{\numexpr #1-1\relax}%
}

\newcommand{\pluralityWithRunoffTheoremRef}{1}
\newcommand{\arithmeticalLemmaRef}{1}

\newcommand{\pluralityWithRunoffTheorem}{%
For $m \geq 4$, the limit probability that Plurality with Runoff (PR) is susceptible to coalitional manipulation is equal to~$1$, that is,
\[
\proba_{\infty,m}(\textnormal{PR is CM}) = 1.
\]%
}

\newcommand{\arithmeticalLemma}{%
Let $\ell,n$ be positive integers and let $p$ be a nonnegative integer such that
\[
    p \;\leq\; \frac{n - \ell}{\ell + 1}.
\]
Then there exist integers $q_1, \ldots, q_\ell > p$ satisfying
\[
    q_1 + \cdots + q_\ell + p \;=\; n.
\]
There exists an integer $k$ with $0 \le k < \ell$ such that one solution is obtained by setting
$q_j = \left\lceil \frac{n - p}{\ell} \right\rceil - 1$ for $1 \le j \le k$, and
$q_j = \left\lceil \frac{n - p}{\ell} \right\rceil$ for $k < j \le \ell$.
}

\newboolean{isIJCAI}
\setboolean{isIJCAI}{false} 
\newcommand{\showIJCAIorHAL}[2]{%
  \ifthenelse{\boolean{isIJCAI}}{#1}{#2}%
}

\newtheorem{theorem}{Theorem}
\newtheorem{lemma}{Lemma}
\newtheorem{corollary}{Corollary}

\title{Super Condorcet Winners\\and Limit Coalitional Manipulability of IRV}

\author{
Élie de Panafieu$^1$
\and
François Durand$^1$\And
Guillem Perarnau$^2$\\
\affiliations
$^1$Nokia Bell Labs France\\
$^2$Universitat Politècnica de Catalunya\\
\emails
\{elie.de\_panafieu, francois.durand\}@nokia.com,
guillem.perarnau@upc.edu
}

\begin{document}

\maketitle

\begin{abstract}
We study the \emph{limit CM rate} of single-winner voting rules under Impartial Culture, defined as the probability that a preference profile is coalitionally manipulable in the limit of large electorates.
For $m=3$ candidates, Lepelley and Valognes [1999]
derived a closed-form expression for Plurality with Runoff, or equivalently Instant-Runoff Voting (IRV), and showed that its limit CM rate is strictly below~1.
This is remarkable because Kim and Roush [1996]
established a limit of~1 for several major rules, including Maximin and all positional scoring rules except Veto.
In this paper, we generalize the result of Lepelley and Valognes to any number of candidates $m \geq 4$.
We show that Plurality with Runoff has a limit CM rate equal to~1 for all $m \geq 4$, whereas IRV retains a limit CM rate strictly below~1.
To this end, we rely on the notion of \emph{Super Condorcet Winner}, recently introduced by Durand [2025],
which yields an upper bound on the CM rate of IRV.
We prove that this bound is asymptotically tight and compute the probability that a Super Condorcet Winner exists, thereby obtaining the exact limit CM rate of IRV.
\end{abstract}

\showIJCAIorHAL{%
An \href{https://shs.hal.science/hal-05566713}{\textcolor{blue}{extended version with appendices}}, the \href{https://github.com/francois-durand/limit_cm_rate_of_irv_ijcai_2026}{\textcolor{blue}{companion code}}, and a \href{https://youtu.be/4i2A7qUP-eo}{\textcolor{blue}{short video presentation}} are available online \cite{hal_version,code,video}.%
}{%
The \href{https://github.com/francois-durand/limit_cm_rate_of_irv_ijcai_2026}{\textcolor{blue}{companion code}} and a \href{https://youtu.be/4i2A7qUP-eo}{\textcolor{blue}{short video presentation}} are available online \cite{code,video}.%
}

\section{Introduction}

\subsection{Motivation}

\emph{Coalitional manipulation} (CM) occurs when a group of voters misreports their preferences to ensure the election of a candidate they all strictly prefer to the sincere winner.
By the Gibbard–Satterthwaite theorem \cite{gibbard1973manipulation,satterthwaite1975strategyproofness}, every non-trivial voting rule is vulnerable to such manipulation, even by a single voter.
The severity of this vulnerability is commonly measured by the \emph{CM rate}, defined either as the probability that a rule is coalitionally manipulable under a given random preference model, or as the proportion of preference profiles exhibiting coalitional manipulability in a real-world dataset.

A classical reference model is \emph{Impartial Culture} (IC), in which each voter's preference ranking is drawn independently and uniformly at random.
Although unrealistic, this model remains of considerable interest because it is widely regarded as a worst-case scenario for various voting properties \cite[Figure 7.6]{tsetlin2003impartialculture,durand2025irv,durand2015towards}.

In a pioneering paper, Kim and Roush \shortcite{kim1996manipulability} studied the asymptotic CM rate of several voting rules under Impartial Culture in the limit of large electorates.
They showed that for several prominent rules, the limit is one: Maximin, all positional scoring rules except Veto, and, for three candidates, Coombs' rule.
They also identified two exceptions: Veto and the rule that now bears their name.
Later, Lepelley and Valognes \shortcite{lepelley1999kimroush} identified another exception: for three candidates, Plurality with Runoff has a limit CM rate strictly below one.
In this case, it is equivalent to Instant-Runoff Voting (IRV), which proceeds by iteratively eliminating the candidate with the lowest plurality score.
Taken together, these results highlight the scarcity of voting rules whose limit CM rate under Impartial Culture is strictly below one.

IRV and Plurality with Runoff are of particular interest, as they are used in public elections in several countries.
Moreover, prior work identifies IRV as one of the most resilient voting rules with respect to coalitional manipulation (see Section~\ref{sec:related_work}).



In a recent paper, Durand \shortcite{durand2025irv} introduced the notion of \emph{Super Condorcet Winner}, a candidate whose plurality score exceeds the average when confronted with any subset of opponents, and showed that the existence of a Super Condorcet Winner suffices to make IRV immune to coalitional manipulation.
In other words, the proportion of profiles without a Super Condorcet Winner provides an upper bound on the CM rate of IRV, which was observed to be relatively tight in empirical datasets.
This makes the concept of Super Condorcet Winner a promising tool for analyzing the resilience of IRV to coalitional manipulation.

\subsection{Contributions}

We generalize the result of Lepelley and Valognes~\shortcite{lepelley1999kimroush} on Plurality with Runoff and IRV from three candidates to any number of candidates $m \geq 4$.
We first show that the limit CM rate of Plurality with Runoff is equal to~1, and then turn to the study of IRV.

We show that the existence of a Super Condorcet Winner is not only sufficient for IRV to be immune to coalitional manipulation, as established by Durand~\shortcite{durand2025irv}, but that under Impartial Culture the absence of a Super Condorcet Winner implies coalitional manipulability of IRV 
with probability tending to one as the number of voters tends to infinity.

This asymptotic equivalence allows us to characterize the limiting coalitional manipulability of IRV.
We prove that the probability that a Super Condorcet Winner exists under Impartial Culture converges to a positive limit and derive a closed-form expression for this probability, which directly yields the limiting CM rate of IRV and shows that it is strictly below~1.
Finally, we compute this limit numerically for up to fourteen candidates and perform a sanity check of our theoretical findings through large-scale Monte Carlo simulations.

\subsection{Related Work}\label{sec:related_work}

Beyond the references already discussed, a large body of work has examined coalitional manipulation.
IRV consistently emerges as one of the most resilient voting rules, whether across empirical studies on real preference profiles \cite{chamberlin1984observed,tideman2006collective,green2016statistical,durand2023coalitional} \cite[Chapter~9]{durand2015towards}, simulations under random preference models \cite{green2011four,green2014strategic,green2016statistical} \cite[Chapters~7--8]{durand2015towards}, and theoretical analyses \cite{lepelley1994vulnerability,lepelley1999kimroush,lepelley2003homogeneity,durand2025irv}.
The results of the present paper therefore add further theoretical support to this already substantial body of evidence in favor of IRV.

It is well known that IRV is not only infrequently susceptible to coalitional manipulation, but also computationally hard to manipulate: it is NP-complete to decide whether IRV is manipulable, whether by an individual or by a coalition \cite{bartholdi1991stv}.
The best current exact algorithm for coalitional manipulation runs in time $\bigO(m! \, n)$~\cite{coleman2007complexity}, where $m$ is the number of candidates and $n$ the number of voters.
In practice, experimental evidence \cite{vanderstraeten2010strategic} suggests that the computational burden of constructing a profitable strategic ballot under IRV can deter strategic attempts.

Another important recent contribution on coalitional manipulability is due to Xia \shortcite{xia2023impact}.
In his framework (see, e.g., Corollary~1), the maximal number of manipulators, called the \emph{budget} and denoted $B$, is fixed independently of the number of voters $n$, and the probability of coalitional manipulability essentially decreases as $\frac{B}{\sqrt{n}}$.
In contrast, we allow an arbitrary number of manipulators, provided they all strictly prefer the winner after manipulation.

\subsection{Limitations}

One may question the realism of coalitional manipulation itself, as large-scale coordination among voters may be difficult to achieve.
As discussed by Durand~\shortcite[Introduction]{durand2015towards} and more recently by Eggers and Nowacki~\shortcite{eggers2024susceptibility}, coalitional manipulation is best interpreted not as a model of \emph{ex ante} vulnerability to strategic voting, especially in large electorates, but rather as a measure of the \emph{ex post} regret experienced by a population after casting sincere ballots.

\subsection{Roadmap}

Section~\ref{sec:framework} introduces our framework.
Section~\ref{sec:plurality_with_runoff} studies Plurality with Runoff.
Section~\ref{sec:asymptotic_equivalence_scw_irv} shows that, with probability tending to~$1$ for large electorates, the existence of a Super Condorcet Winner is equivalent to IRV being immune to coalitional manipulation.
Section~\ref{sec:limit_probabilities} derives the limiting probability that a Super Condorcet Winner exists and applies this result to IRV.
Section~\ref{sec:future_work} concludes with directions for future research.


\section{Framework}\label{sec:framework}

Let the set of voters be denoted by $[n] = \{1, \ldots, n\}$ for a positive integer~$n$, and the set of candidates by $[m] = \{1, \ldots, m\}$ for an integer $m \ge 2$.
We use $a, b, c, d$ as typical variables for candidates.
Throughout the paper, $m$ is fixed and we study the asymptotic regime where the number of voters $n$ tends to infinity.
For any set $S$, we denote by $\mP(S)$ its power set and by $\mP^*(S)$ the set of its nonempty subsets.

We denote by $\rankings$ the set of all rankings over $[m]$, leaving the dependence on $m$ implicit.
A ranking $r \in \rankings$ may be written explicitly as $r = (1\!\succ\!2\!\succ\!3\!\succ\!4)$ for instance.
For any pair of candidates $c, b \in [m]$, let $\rankingswith{c \succ b}$ denote the subset of rankings that place $c$ above $b$.
More generally, for any subset of opponents $B \in \mP([m] \setminus \{c\})$, let $\rankingswith{c \succ B}$ denote the set of rankings in which $c$ is ranked above all candidates in~$B$.
Finally, for any subset $S \in \mP([m])$ (typically when $c \in S$), let $\rankingswith{c \succeq S}$ denote the set of rankings in which $c$ is ranked above every other candidate in~$S$.

A \emph{profile} is represented by a vector $\vn = (n_r)_{r \in \rankings}$ such that $\sum_{r \in \rankings} n_r = n$, where $n_r$ denotes the number of voters with preference ranking $r$ over the candidates.
Given a profile $\vn$, we define
\[
n_{c \succ b} = \!\!\!\sum_{r \in \rankingswith{c \succ b}}\!\! n_r, \quad
n_{c \succ B} = \!\!\!\sum_{r \in \rankingswith{c \succ B}}\!\! n_r, \quad
n_{c \succeq S} = \!\!\!\sum_{r \in \rankingswith{c \succeq S}}\!\! n_r,
\]
which respectively count the voters who rank candidate~$c$ above candidate~$b$, above all candidates in~$B$, and above all other candidates in~$S$.

A \emph{voting rule} is a function $f$ that maps each profile to a candidate in the set~$[m]$.
In this paper, we focus on two rules: Instant-Runoff Voting and Plurality with Runoff.

In \emph{Instant-Runoff Voting} (IRV), the winner is determined through successive eliminations.
The process starts with the full set $S_1 = [m]$.
At each round~$j$, IRV eliminates a candidate among those minimizing $n_{c \succeq S_j}$, referred to as the \emph{plurality score} of $c$ in~$S_j$, and the remaining candidates form the next set~$S_{j+1}$.
The procedure repeats until a single candidate remains, who is declared the winner.
We say that IRV \emph{visits} a set~$S$ if $S = S_j$ for some round~$j$ of the elimination process.

In \emph{Plurality with Runoff} (PR), the two candidates with the highest plurality scores in the full set of candidates advance to a second round.
The winner is the candidate preferred by a majority of voters in their pairwise comparison.

In both rules, ties are resolved using a predetermined tie-breaking rule; our theoretical results are independent of the particular tie-breaking rule used.

We say that a voting rule $f$ is \emph{coalitionally manipulable} (CM) in a profile $\vn$ (or equivalently that the profile~$\vn$ is CM in $f$) if there exists a \emph{target profile} $\vn' \neq \vn$ such that, for every ranking~$r$, whenever $n'_r < n_r$, the outcome $f(\vn')$ is preferred to $f(\vn)$ according to~$r$.
In other words, only voters who prefer the new outcome may have changed their ballots.
We use ``coalitionally manipulable'' interchangeably with ``susceptible to coalitional manipulation'', and likewise ``non coalitionally manipulable'' with ``immune to coalitional manipulation''.

A candidate $c$ is a \emph{Super Condorcet Winner} (SCW) in a profile~$\vn$ if, in every subset $S$ of candidates containing~$c$ and at least one other candidate, the plurality score of $c$ exceeds the average.
Formally, for every $S \in \mP([m])$ such that $c \in S$ and $|S| \geq 2$,
\begin{equation}\label{eq:scw_definition}
n_{c \succeq S} > \frac{n}{|S|}.
\end{equation}
The standard notion of a \emph{Condorcet winner} corresponds to satisfying this condition for all subsets $S$ of size~2.

Whenever an SCW exists, it is straightforward to verify that this candidate wins under IRV.
Moreover, in such profiles, IRV is immune to coalitional manipulation \cite{durand2025irv}.
Indeed, if IRV is coalitionally manipulable from a profile~$\vn$, with winner~$c$, to a target profile~$\vn'$, then applying IRV to~$\vn'$ must visit a \emph{violating subset} for~$\vn$, that is, a subset of candidates~$S$ for which Equation~\eqref{eq:scw_definition} does not hold.

Under \emph{Impartial Culture} (IC), each voter’s preference ranking is drawn independently and uniformly at random from the set~$\rankings$.
Formally, for each voter $v \in [n]$ and ranking $r \in \rankings$, let $X_r^{(v)}$ be the indicator variable equal to~$1$ if voter~$v$ draws~$r$, and~$0$ otherwise.
The Boolean vector $\vX^{(v)} = (X_r^{(v)})_{r \in \rankings}$ thus encodes the preference ranking of voter~$v$ and follows a multinomial distribution with one trial and uniform probabilities:
\(
\vX^{(v)} \sim \mathrm{MBin}\bigl(1; 1/m!, \ldots, 1/m!\bigr).
\)
The random profile under Impartial Culture is given by
\(
\vN = \sum_{v \in [n]} \vX^{(v)},
\)
which counts, for each ranking $r$, the number of voters with ranking~$r$.
By independence of the voters, $\vN$ follows a multinomial distribution with $n$ trials:
\(
\vN \sim \mathrm{MBin}\bigl(n; 1/m!, \ldots, 1/m!\bigr).
\)
We denote by $N_r$, $N_{c \succ b}$, $N_{c \succ B}$, and $N_{c \succeq S}$ the corresponding random variables, defined analogously to their deterministic counterparts.


We denote by $\proba_{n,m}$ the probability under Impartial Culture with $n$ voters and $m$ candidates, and by $\proba_{\infty,m}$ its limiting probability as $n \to \infty$, when it exists.

\section{Plurality with Runoff}\label{sec:plurality_with_runoff}

Before turning to our main results on IRV, we note that the result of Lepelley and Valognes \shortcite{lepelley1999kimroush}, which shows that Plurality with Runoff has a limit CM rate strictly below~1 for three candidates, does not extend to the case $m \geq 4$.

\begin{theorem}
\pluralityWithRunoffTheorem  
\end{theorem}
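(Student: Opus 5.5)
We need to show that, under IC with $m \ge 4$, PR is CM with probability tending to~$1$. Under IC, plurality scores in $[m]$ are $n/m + \Theta(\sqrt n)$ fluctuations, so all first-round scores are within $O(\sqrt n)$ of each other with high probability. The idea is to exhibit a manipulation that works whenever a certain asymptotically sure event holds.

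Let $w$ be the sincere PR winner, and let $a$ be the other finalist. Call the two non-finalists $c$ and $d$; at least two exist since $m\ge 4$ gives at least two non-finalists. The finalist $w$ beats $a$. The plan is to find a candidate $c \ne w$ such that $c$ beats $w$ pairwise, and then to make $c$ a finalist against $w$.

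\textbf{Step 1: a candidate beating $w$.} Voters who prefer $c$ to $w$ may change their ballot. We want $c$ to reach the runoff together with $w$ (or with some $e$ that $c$ beats), while pushing $a$ out.

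With high probability, $w$ is not a Condorcet winner restricted to the relevant candidates. More concretely, we use that pairwise margins $N_{x\succ y}-n/2$ are $\Theta(\sqrt n)$ Gaussian with nondegenerate joint law. In the limit the event ``some $c\notin\{w,a\}$ beats $w$'' has probability less than~$1$, so a single fixed strategy will not suffice, and we need something more flexible.

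\textbf{Step 2: the manipulation, using a third candidate.} The natural strategy is to pick a non-finalist $c$ and let voters preferring $c$ to $w$ rank $c$ first. This raises $c$'s plurality by a linear amount: about $n/2$ voters prefer $c$ to $w$, and a constant fraction of them currently rank neither $c$ nor $w$ first. Those who currently rank $a$ first also lower $a$'s score. The result is a runoff between $c$ and $w$ (or $c$ and $a$).

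If $c$ beats $w$ pairwise, this succeeds. If no non-finalist beats $w$, use instead the alternative with $a$ demoted: voters preferring $a$ to $w$ (a minority) cannot help.

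Here $m\ge4$ matters. Choose two non-finalists $c,d$ with $c$ beating $d$ and $d$ beating $w$, or arrange for $w$ to face a candidate $d$ that beats $w$. Voters preferring $d$ to $w$ can bury $w$: move $w$ to last place and boost others so that $w$ drops out of the top two, leaving a runoff $d$ versus $x$ that $d$ wins. Since scores differ by only $O(\sqrt n)$ and the coalition has linear size, such reshuffling of first places is feasible with high probability.

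\textbf{Step 3: the probabilistic conclusion.} It remains to show that with probability tending to~$1$ some candidate $d\ne w$ beats $w$ pairwise, or that $w$ can otherwise be eliminated. We also need the runoff partner to be chosen among candidates that $d$ beats.

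The limit probability that $w$, the PR winner, is the Condorcet winner is strictly positive. So the main obstacle is the case where $w$ is a Condorcet winner. There, every coalition must consist of voters preferring some $d$ to $w$, and $d$ must still beat $w$ pairwise after manipulation. This is impossible, since sincere voters preferring $w$ to $d$ outnumber the rest.

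Hence my plan fails as stated unless the Condorcet-winner event has vanishing probability among PR winners. I would check this via the Gaussian limit of first-place scores and pairwise margins. Correlations of order $1/\sqrt n$ suggest it is not vanishing, so this step is where I expect the argument either to need a cleverer mechanism or to break.
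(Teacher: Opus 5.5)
\textbf{There is a genuine gap.} Your proposal does not reach a proof. It ends by naming the Condorcet-winner case as an obstacle, and that obstacle does not exist.

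The error is in Step 3. You assert that a coalition of voters preferring $d$ to $w$ needs $d$ to beat $w$ pairwise after manipulation. That holds only if $w$ is one of the finalists in the manipulated profile. Under Plurality with Runoff, the coalition can keep $w$ out of the runoff altogether by lifting \emph{two} other candidates above $w$'s first-round score. The coalition cannot lower that score, since $w$'s first-place voters prefer $w$ to everyone, but the score stays near $n/m$. Once $w$ is not a finalist, its pairwise strength is irrelevant, and whether $w$ is a Condorcet winner no longer matters.

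You gestured at this in Step 2 (``$w$ drops out of the top two, leaving a runoff $d$ versus $x$''), but you never checked feasibility. ``The coalition has linear size'' is not enough. The same scheme fails for $m=3$, where the limit CM rate is strictly below $1$, so the quantitative constraints are the heart of the proof.

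\textbf{The missing construction.} Pick distinct $a,b\neq w$ and take as coalition all voters preferring $a$ to $w$, about $n/2$ of them. A fraction $\alpha$ of them report $a$ first and the rest report $b$ first; all other ballots are unchanged. Consider a profile within $L_1$-distance $\varepsilon$ of uniform.
\begin{itemize}
\item First-round scores become approximately $\alpha n/2$ for $a$, $n/(2m)+(1-\alpha)n/2$ for $b$, $n/m$ for $w$, and $n/(2m)$ for every other candidate.
\item In the runoff, $a$ gets about $n/6+\alpha n/2$: the sincere non-coalition voters ranking $a$ above $b$, plus the $a$-first manipulators.
\end{itemize}
So $a$ and $b$ reach the runoff and $a$ wins as soon as
\[
\frac{2}{3} < \alpha < \frac{m-1}{m},
\]
the further condition $\alpha>2/m$ being implied. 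This interval is nonempty exactly when $m\ge 4$; for $m=3$ it is empty. Every coalition member prefers $a$ to $w$, so this is a valid coalitional manipulation.

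All the inequalities hold with linear slack, so they survive $O(\varepsilon n)$ perturbations. The weak law of large numbers then gives probability tending to $1$. You need no Gaussian analysis of pairwise margins and no case split on whether $w$ is a Condorcet winner.
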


We provide only a proof sketch. The full proof is deferred to \showIJCAIorHAL{the technical appendix \cite{hal_version}}{Appendix~\ref{sec:appendix_pr}}.

\begin{proof}[Sketch of proof]
Fix $m \geq 4$, $\varepsilon > 0$, and $n > 0$, and consider a profile~$\vn$ whose normalized vector $\vn / n$ lies within $L_1$-distance at most~$\varepsilon$ from the uniform profile $(1/m!,\ldots,1/m!)$.
Let $c$ denote the winner of Plurality with Runoff at~$\vn$.
We show that, for $\varepsilon$ sufficiently small and $n$ large enough, the profile~$\vn$ is coalitionally manipulable in favor of any other candidate.

Let $a$ and $b$ be two distinct candidates, both different from~$c$.
For $\alpha \in [0,1]$, define a target profile~$\vn'$ as follows: among voters who prefer~$a$ to~$c$, a proportion~$\alpha$ report~$a$ as their top choice, while the remaining proportion~$1-\alpha$ report~$b$ as their top choice; all other ballots remain unchanged.
For $n$ sufficiently large, these proportions can be approximated arbitrarily closely.

Elementary calculations show that~$a$ and~$b$ advance to the runoff and that~$a$ wins the runoff provided that
\[
\frac{2}{3} < \alpha < \frac{m-1}{m},
\]
that $\varepsilon$ is sufficiently small, and that $n$ is large enough.
The condition on $\alpha$ is satisfiable for all $m \geq 4$.

Therefore, for $n$ large enough, any profile sufficiently close to the uniform profile is coalitionally manipulable under Plurality with Runoff.
By the weak law of large numbers, the probability of such profiles tends to~$1$ as $n \to \infty$.
\end{proof}

The remainder of the paper is devoted to extending the result of Lepelley and Valognes \shortcite{lepelley1999kimroush} to an arbitrary number of candidates, when interpreted as a result on IRV.

\section{Limit Equivalence Between SCW and IRV Immunity to Coalitional Manipulation} \label{sec:asymptotic_equivalence_scw_irv}

The existence of a Super Condorcet Winner is a sufficient condition for IRV to be immune to coalitional manipulation.
This condition is not necessary in general: there exist profiles without a Super Condorcet Winner for which IRV nevertheless remains immune to coalitional manipulation.
However, Durand~\shortcite{durand2025irv} observed that such counterexamples appear to be rare in real-world datasets.

In this section, we show that the same phenomenon occurs under Impartial Culture in the limit of large electorates: the probability of such counterexamples vanishes as the number of voters tends to infinity.
In other words, with high probability, IRV is immune to coalitional manipulation if and only if a Super Condorcet Winner exists.
To this end, we show that whenever no Super Condorcet Winner exists, and under mild conditions that are typically satisfied under Impartial Culture with large electorates, IRV is susceptible to coalitional manipulation.

\subsection{Strong Violation of the SCW Condition}

To build intuition, consider the problem of proving that IRV is susceptible to coalitional manipulation, and let $c$ denote the winner in the initial profile.
Any such manipulation must involve a subset $S = \{b_1, \ldots, b_\ell, c\}$ that leads to the elimination of~$c$ in the target profile.
Since manipulators cannot decrease the score of~$c$ below its initial value, denoted $p = n_{c \succeq S}$, they must instead ensure that the other candidates in~$S$ attain scores $q_1, \ldots, q_\ell$ at least equal to~$p$.
Moreover, to make the argument independent of the tie-breaking rule, we begin by considering situations in which these scores can be ensured to be strictly larger than~$p$.
The following arithmetic lemma provides a sufficient condition for the existence of such integers $q_1,\ldots,q_\ell$; its proof is elementary and deferred to \showIJCAIorHAL{the technical appendix \cite{hal_version}}{Appendix~\ref{sec:appendix_arithmetic}}.

\begin{lemma}  \label{th_average_arithmetic}
\arithmeticalLemma  
In addition, every such $q_j$ satisfies
\begin{equation}\label{eq:q_j_upper_bound}
    q_j \;<\; p \;+\; 2 \left(\frac{n}{\ell + 1} - p\right) + 1.
\end{equation}%
\end{lemma}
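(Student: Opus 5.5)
Set $M = n - p$ and $Q = \lceil M/\ell \rceil$. Write $M = \ell Q - k$ with $0 \le k < \ell$; this is possible because $\ell Q - \ell < M \le \ell Q$. Define $q_j = Q-1$ for $1 \le j \le k$ and $q_j = Q$ for $k < j \le \ell$. Then
\[
q_1 + \cdots + q_\ell = \ell Q - k = M,
\]
which gives $q_1 + \cdots + q_\ell + p = n$. This is the explicit solution described in the statement.

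It remains to check that every $q_j > p$. If $k \ge 1$, the smallest value is $Q-1$; if $k = 0$, it is $Q$. Either way it suffices to show $Q - 1 > p$, or $Q > p$ when $k = 0$. The hypothesis $p \le (n-\ell)/(\ell+1)$ rearranges to
\[
(\ell+1)p \le n - \ell, \qquad\text{that is,}\qquad n - p \ge \ell p + \ell = \ell(p+1).
\]
So $M/\ell \ge p+1$, hence $Q \ge p+1$, and the case $k = 0$ is settled. For $k \ge 1$ we have $M = \ell Q - k$, so $\ell Q = M + k \ge \ell(p+1) + 1$. Therefore $Q > p+1$, and since $Q$ is an integer, $Q \ge p+2$. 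This gives $Q - 1 \ge p+1 > p$, completing existence.

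The main obstacle is the upper bound \eqref{eq:q_j_upper_bound}. Since every $q_j \le Q < M/\ell + 1$, it suffices to prove
\[
\frac{n-p}{\ell} \le p + 2\left(\frac{n}{\ell+1} - p\right) = \frac{2n}{\ell+1} - p.
\]
Multiplying by $\ell(\ell+1) > 0$, this is equivalent to
\[
(\ell+1)(n-p) \le 2\ell n - \ell(\ell+1)p,
\]
which rearranges to $(\ell+1)(\ell-1)p \le (\ell-1)n$, that is, $(\ell^2-1)p \le (\ell-1)n$. For $\ell = 1$ both sides vanish and the inequality holds. For $\ell \ge 2$ it reduces to $p \le n/(\ell+1)$, which follows from the hypothesis $p \le (n-\ell)/(\ell+1)$. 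This gives $q_j \le Q < M/\ell + 1 \le p + 2\bigl(\frac{n}{\ell+1} - p\bigr) + 1$, which is the strict inequality \eqref{eq:q_j_upper_bound}.

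One point needs care: the phrase ``every such $q_j$'' should refer to the explicit solution constructed above. An arbitrary solution could have one very large $q_j$, and the bound would then fail.
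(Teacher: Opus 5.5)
Your proof is correct and follows essentially the same route as the paper. You use the same explicit solution (your $k = \ell Q - (n-p)$ is exactly the paper's $k$) and the same case split on whether $\ell$ divides $n-p$ for the lower bound. For the upper bound you use the same inequality $\frac{n-p}{\ell} \le \frac{2n}{\ell+1} - p$; the paper derives it from the identity $\frac{n-p}{\ell} = p + \bigl(\frac{n}{\ell+1}-p\bigr)\frac{\ell+1}{\ell}$ together with $\frac{\ell+1}{\ell}\le 2$, whereas you cross-multiply.

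Your closing caveat is also right. The bound is guaranteed only for the constructed solution, since an arbitrary solution can violate it. This is how the paper's proof, and its later use of the bound in Lemma~\ref{lem:main}, implicitly read ``every such $q_j$''.
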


Lemma~\ref{th_average_arithmetic} provides a condition under which the opponents’ scores within a violating subset can, in principle, be rebalanced so that they all exceed that of the original winner.
The proposed solution equalizes these scores as much as possible, up to a difference of one due to integrality.
Equation \eqref{eq:q_j_upper_bound} will be used in the proof of Lemma~\ref{lem:main} to bound the number of ballot changes required for the manipulators.

Motivated by Lemma~\ref{th_average_arithmetic}, we say that a candidate~$c$ \emph{strongly violates the SCW condition} if there exists a subset $S$ of candidates containing~$c$ and at least one opponent such that
\begin{equation}\label{eq:strong_violation_definition}
    n_{c \succeq S} \;\leq\; \frac{n - (|S|-1)}{|S|}.
\end{equation}
This inequality is a direct reformulation of the condition in Lemma~\ref{th_average_arithmetic} under the change of variables $|S| = \ell + 1$.
Whenever it holds, we have in particular $n_{c \succeq S} \leq \frac{n}{|S|}$, so that $S$ is a violating subset and $c$ is not a Super Condorcet Winner.

\subsection{Manipulation Lemma}\label{sec:large_profiles_close_to_mean}

In addition to a strong violation of the SCW condition, the following lemma assumes that the profile is sufficiently large and sufficiently close to the expected profile. Under these assumptions, we prove that IRV is susceptible to coalitional manipulation.

\begin{lemma}\label{lem:main}
For every integer $m$ and every $\delta \in (0, 1)$,
there exists a constant $C_{m, \delta}$ such that any profile with $n > C_{m, \delta}$ voters and $m$ candidates satisfying
\begin{equation} \label{eq_close_to_mean}
    \left| n_r - \frac{n}{m!} \right| < n^{\delta}
    \quad \text{for all } r \in \rankings,
\end{equation}
and where the IRV winner strongly violates the SCW condition, is coalitionally manipulable in IRV, regardless of the tie-breaking rule.
\end{lemma}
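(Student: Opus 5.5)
The plan is to build an explicit target profile. In it, IRV first eliminates every candidate outside a strongly violating subset, then eliminates the sincere winner~$c$ inside that subset, and every manipulator prefers whichever candidate eventually wins. Let $c$ be the IRV winner and let $S=\{b_1,\ldots,b_\ell,c\}$ with $\ell \ge 1$ satisfy \eqref{eq:strong_violation_definition}. Set $p=n_{c\succeq S}$.

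The coalition will be $\mathcal{M}$, the voters who rank $c$ \emph{below every} $b_j$. Whatever candidate of $S\setminus\{c\}$ finally wins, every voter of $\mathcal{M}$ strictly prefers it to $c$. So we never need to control the rounds after $c$ is eliminated; this sidesteps the near-ties among the $b_j$ that would otherwise arise. Voters outside $\mathcal{M}$ keep their ballots. Manipulators only move candidates of $S\setminus\{c\}$ and outsiders, and they keep $c$ last among $S$. Hence $c$'s score in any set $U \supseteq S$ never increases, and in $S$ it stays exactly $p$. By \eqref{eq_close_to_mean}, every count $n_{x\succeq U}$ equals its Impartial Culture mean up to $\bigO(n^\delta)$, with constants depending only on $m$. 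In particular $|\mathcal{M}| = \frac{n}{\ell+1}+\bigO(n^\delta)$. The strong violation combined with \eqref{eq_close_to_mean} gives $p=\frac{n}{\ell+1}-\bigO(n^\delta)$.

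\emph{Step 1 (balancing inside $S$).} Apply Lemma~\ref{th_average_arithmetic} to obtain $q_1,\ldots,q_\ell>p$ with $\sum_j q_j = n-p$. By \eqref{eq:q_j_upper_bound} and the estimate on $p$, each $q_j = \frac{n}{\ell+1}+\bigO(n^\delta)$. Let $s_j$ be the number of non-manipulators whose top choice within $S$ is $b_j$. By \eqref{eq_close_to_mean}, $s_j \approx \frac{n}{\ell+1}\cdot\frac{\ell-1}{\ell}$, up to $\bigO(n^\delta)$, since these are voters with $b_j$ top in $S$ and $c$ not last. For $n > C_{m,\delta}$ this gives $s_j < q_j$ with a $\Theta(n)$ margin. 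The identity $\sum_j (q_j-s_j) = n-p-\sum_j s_j = |\mathcal{M}|$ holds automatically. We therefore partition $\mathcal{M}$ into groups of sizes $q_j - s_j$, and each voter of group $j$ places $b_j$ at the top of its ballot restricted to $S$, with $c$ last. Whenever IRV visits $S$, the scores are then exactly $p$ for $c$ and $q_j>p$ for each $b_j$. So $c$ is eliminated at $S$ irrespective of tie-breaking, and the winner lies in $S\setminus\{c\}$.

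\emph{Step 2 (forcing IRV to visit $S$).} Each manipulator's ballot will put its assigned $b_j$ in first position overall, then the rest of $S\setminus\{c\}$, then the outsiders, and $c$ last. Consider any set $U\supsetneq S$. Each outsider $x\in U\setminus S$ loses all manipulators who had $x$ on top in $U$; near-uniformity gives that there are $\Theta(n)$ of these, roughly $\frac{n}{(\ell+1)|U|}$ up to $\bigO(n^\delta)$. Its score therefore drops to $\frac{n}{|U|}-\Theta(n)$. Meanwhile every member of $S$, including $c$, keeps a score of at least $\frac{n}{|U|}-\bigO(n^\delta)$, since they never lose supporters. For $n$ large enough the minimum in $U$ is strictly attained by an outsider. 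By induction from $U=[m]$ downward, IRV eliminates all outsiders first and visits $S$; when $\ell=m-1$ this step is vacuous. The target profile differs from $\vn$ because $c$ no longer wins, and only voters in $\mathcal{M}$ changed their ballots. This establishes coalitional manipulability.

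The main obstacle is Step~1: the coalition must be large enough to lift every $b_j$ strictly above $p$, yet it must not touch $c$. The choice of $\mathcal{M}$ as the voters ranking $c$ last in $S$ is what makes the counting work. The remaining risk is bookkeeping, namely checking carefully that all the $\Theta(n)$ margins (the inequalities $s_j<q_j$ and the outsider deficits) dominate the $\bigO(n^\delta)$ errors uniformly over all subsets $S$ and $U$. This is a finite check for fixed $m$, and it determines $C_{m,\delta}$.
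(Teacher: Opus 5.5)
Your construction is correct, and it takes a genuinely different route from the paper. The paper uses a coalition of only $\bigO(n^\delta)$ voters, split into two groups:
\begin{itemize}
\item $\firstsetofmanipulators$ consists of $2\lceil m!\,n^\delta\rceil$ voters with the specific sincere ranking $(a_1\succ\cdots\succ a_k\succ b_1\succ\cdots\succ b_\ell\succ c)$. They move $B$ to the front, which lowers $a_j$ by just enough in round $j$ to force the elimination order $a_1,\ldots,a_k$.
\item $\secondsetofmanipulators$ consists of voters whose rankings start with $a_1\succ\cdots\succ a_k$ and end with $c$. They permute $B$ to transfer the excesses $n_{b_j\succeq S}-q_j$ to the deficient $b_i$.
\end{itemize}
The paper then has to check that the two groups can be chosen disjoint and that $|\secondsetofmanipulators|=\bigO(n^\delta)$, which is where \eqref{eq:q_j_upper_bound} enters.

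You instead enlist the whole linear-size set $\mathcal{M}$ of voters ranking $c$ last in $S$, and each of them puts its assigned $b_j$ first. This single move depresses every outsider by $\Theta(n)$ in every $U\supsetneq S$. So you never fix an elimination order, and ties among outsiders do not matter. The identity $\sum_j(q_j-s_j)=|\mathcal{M}|$ then gives the exact scores in $S$ without any excess/deficit bookkeeping. In fact you only need $q_j>p$, since $q_j-s_j>p-s_j=\frac{n}{\ell(\ell+1)}+\bigO(n^\delta)$; so \eqref{eq:q_j_upper_bound} is superfluous in your version. In exchange, the paper's route proves the quantitatively stronger fact that a sublinear coalition of size $\bigO(n^\delta)$ already suffices.

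One sentence in Step~2 needs repair: the $b_j$ can lose supporters. With an arbitrary partition of $\mathcal{M}$, a manipulator whose sincere top in $U$ is $b_1$ but who is assigned to group~$2$ now votes for $b_2$, so $b_1$ loses it. The bound you need still holds, for either of two reasons.

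First, you can verify it directly. Every group-$j$ manipulator ranks $b_j$ first, so the new score of $b_j$ in $U$ is the number of non-manipulators with $b_j$ on top in $U$ plus $q_j-s_j$. This is at least
\[
\frac{n}{|U|}\cdot\frac{\ell-1}{\ell}+\frac{n}{\ell(\ell+1)}-\bigO(n^\delta)
=\frac{n}{|U|}+\frac{n}{\ell}\Bigl(\frac{1}{\ell+1}-\frac{1}{|U|}\Bigr)-\bigO(n^\delta),
\]
which exceeds $n/|U|$ by $\Theta(n)$ because $|U|\ge \ell+2$.

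Second, you can choose the partition so that each manipulator stays in the group of its sincere top choice in $S$ whenever that group has room. This is exactly the role of $\mJ$ in the paper. Then $b_j$ loses at most $\max(0,\,n_{b_j\succeq S}-q_j)=\bigO(n^\delta)$ supporters in any $U$.

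With either fix, the rest of your argument goes through as written.
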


\begin{proof}
Let $c$ denote the IRV winner.
Since the SCW condition is strongly violated, there exists a nonempty subset $B = \{b_1, \dots, b_\ell\}$ of other candidates such that
\[
n_{c \succ B} \le \frac{n - \ell}{\ell + 1}.
\]
Let $A = [m] \setminus (B \cup \{c\}) = \{a_1, \ldots, a_k\}$, with $k + \ell = m - 1$.
We construct a manipulation such that, in the modified profile, the IRV elimination order starts with $(a_1, \ldots, a_k, c)$.
All manipulators we consider rank $c$ sincerely last; hence, regardless of the final outcome of IRV, they all prefer the modified outcome to the original one.
We partition the manipulators into two disjoint groups, $\firstsetofmanipulators$ and $\secondsetofmanipulators$.
Voters in $\firstsetofmanipulators$ enforce the first $k$ eliminations $(a_1, \ldots, a_k)$, while voters in $\secondsetofmanipulators$ ensure that the next eliminated candidate is $c$.

\textbf{Concentration of plurality scores.}
Let $S$ be a nonempty subset of candidates and let $d \in S$.
The number of rankings in $\rankings$ satisfying $d \succeq S$ is $m! / |S|$.
Using the triangle inequality together with condition~\eqref{eq_close_to_mean}, we obtain
\begin{equation}
    \label{eq_useful_property}
    \left| n_{d \succeq S} - \frac{n}{|S|} \right|
=
    \bigg| \sum_{\substack{r \in \rankingswith{d \succeq S} }} \left( n_r - \frac{n}{m!} \right) \bigg|
<
    m! \, n^{\delta}.
\end{equation}

\textbf{Transformation 1: Eliminating candidates $a_1,\ldots,a_k$.}
By condition~\eqref{eq_close_to_mean}, the number of voters whose sincere ranking is
$(a_1\!\succ\!\cdots\!\succ\!a_k\!\succ\!b_1\!\succ\!\cdots\!\succ\!b_{\ell}\!\succ\!c)$
is greater than $\frac{n}{m!} - n^{\delta}$.
We select $2 \lceil m! \, n^{\delta} \rceil$ such voters to form the set $\firstsetofmanipulators$.
Since $\delta < 1$, taking $n$ large enough guarantees that
$2 \lceil m! \, n^{\delta} \rceil < \frac{n}{m!} - n^{\delta}$.
Voters in $\firstsetofmanipulators$ change their ballot to
$(b_1\!\succ\!\cdots\!\succ\!b_{\ell}\!\succ\!a_1\!\succ\!\cdots\!\succ\!a_k\!\succ\!c)$.

We now show that, after applying only Transformation~1, the IRV elimination order in the resulting profile~$\vn'$ starts with $(a_1,\ldots,a_k)$.
For $1\le j\le k$, suppose that after $j-1$ rounds the remaining candidates are
$S_j = \{a_j, \dots, a_k, b_1, \dots, b_{\ell}, c\}$.
By Equation~\eqref{eq_useful_property}, for any $d\in S_j\setminus\{a_j\}$,
\[
    n'_{d \succeq S_j} \geq n_{d \succeq S_j} > \frac{n}{m - (j - 1)} - m! \, n^{\delta},
\]
whereas
\[
    n'_{a_j \succeq S_j} = n_{a_j \succeq S_j} - 2 \lceil m! \, n^{\delta} \rceil
    < \frac{n}{m - (j - 1)} - m! \, n^{\delta}.
\]
Therefore, $a_j$ is eliminated at round~$j$.

\textbf{Transformation 2: Eliminating the original winner $c$.}
In the profile $\vn'$ obtained after Transformation~1, the set of remaining candidates at the beginning of round $k+1$ is $S_{k+1} = B \cup \{c\}$, which we hereafter simply denote by~$S$.
Moreover, since Transformation~1 does not modify the relative positions of candidates within $S$, the plurality scores $n'_{d \succeq S}$ for all $d\in S$ are the same as in the original profile.

Since candidate $c$ strongly violates the SCW condition on~$S$ in the original profile~$\vn$, we have
\[
    n'_{c \succeq S} = n_{c \succeq S} \leq \frac{n - \ell}{\ell + 1}.
\]
By Lemma~\ref{th_average_arithmetic}, there exist integers $q_1, \ldots, q_\ell > n_{c \succeq S}$ such that $q_1 + \cdots + q_\ell + n_{c \succeq S} = n$.
We will construct a manipulation leading to a profile~$\vn''$ such that $n''_{c \succeq S} = n_{c \succeq S}$ and $n''_{b_j \succeq S} = q_j$ for all $j\in [\ell]$.
This guarantees that $c$ is eliminated at round~$k+1$, independently of the tie-breaking rule.

To achieve the target scores on~$S$ without altering the elimination order established by Transformation~1, we select the manipulators in $\secondsetofmanipulators$ among voters whose sincere rankings start with
$(a_1\!\succ\!\cdots\!\succ\!a_k)$ and end with candidate~$c$.
We will modify the ballots of these voters only by permuting the candidates in~$B$, while keeping the positions of $a_1,\ldots,a_k$ and $c$ unchanged; this is possible if $\secondsetofmanipulators$ is disjoint from $\firstsetofmanipulators$, which we show feasible below.
Under such modifications, the plurality scores that determine the first $k$ elimination rounds are unaffected, so the eliminations of $(a_1,\ldots,a_k)$ still occur as in the profile~$\vn'$.
At the same time, this allows us to adjust the plurality scores $n''_{b_j \succeq S}$ within~$S$.

Let $\mJ$ denote the set of indices $j$ such that $n_{b_j \succeq S} > q_j$, corresponding to candidates with an excess of voters.
The set $\secondsetofmanipulators$ is constructed as the union, over all $j \in \mJ$, of $n_{b_j \succeq S} - q_j$ arbitrary voters whose sincere ranking starts with $(a_1\!\succ\!\cdots\!\succ\!a_k\!\succ\!b_j)$ and ends with $c$.
For each $i \in [\ell] \setminus \mJ$, corresponding to a candidate with no excess of voters, we ask $q_i - n_{b_i \succeq S}$ manipulators from $\secondsetofmanipulators$ to swap $b_i$ with the candidate currently in position $(k+1)$ in their ranking, thus casting a ballot that begins with $(a_1\!\succ\!\cdots\!\succ\!a_k\!\succ\!b_i)$.
This construction is feasible because the total excess equals the total deficit:
\[
    \sum_{j \in \mJ} (n_{b_j \succeq S} - q_j)
    - \sum_{i \in [\ell] \setminus \mJ} (q_i - n_{b_i \succeq S})
    = 0.
\]
After this reassignment, we have $n''_{b_j \succeq S} = q_j$ for all $j \in [\ell]$, which completes the construction.

\textbf{The sets $\firstsetofmanipulators$ and $\secondsetofmanipulators$ can be chosen disjoint.}
We have already seen that the set $\firstsetofmanipulators$ has size $\bigO(n^{\delta})$.
We now show that the same holds for $\secondsetofmanipulators$.
Since the number of voters with sincere ranking $(a_1\!\succ\!\cdots\!\succ\!a_k\!\succ\!b_1\!\succ\!\cdots\!\succ\!b_{\ell}\!\succ\!c)$ is at least $\frac{n}{m!} - n^{\delta}$, it follows that, for sufficiently large $n$, the set $\secondsetofmanipulators$ can be chosen disjoint from $\firstsetofmanipulators$.

Let us bound the size of $\secondsetofmanipulators$.
By Equation~\eqref{eq_useful_property}, for every $j\in[\ell]$,
\[
    n_{b_j \succeq S} = \frac{n}{\ell+1} + \bigO(n^{\delta}),
    \qquad
    n_{c \succeq S} = \frac{n}{\ell+1} + \bigO(n^{\delta}).
\]
The triangle inequality and $q_j > n_{c \succeq S}$ then imply
\[
    |q_j-n_{b_j \succeq S}|
    =
    \left|
    q_j - n_{c \succeq S} + \bigO(n^{\delta})
    \right|
    \leq
    q_j - n_{c \succeq S} + \bigO(n^{\delta}).
\]
Using the upper bound~\eqref{eq:q_j_upper_bound} from Lemma~\ref{th_average_arithmetic}, we deduce
\[
    |q_j - n_{b_j \succeq S}| = \bigO(n^{\delta})
    \quad \text{for all } j\in[\ell].
\]
Therefore, the size of $\secondsetofmanipulators$, given by $\sum_{j\in\mJ} |q_j - n_{b_j \succeq S}|$, is $\bigO(n^{\delta})$.
%
\end{proof}

\subsection{Limit Equivalence Theorem}\label{sec:limit_equivalence_theorem}


Before stating the theorem, we briefly provide some intuition.
Lemma~\ref{lem:main} applies to profiles satisfying three conditions:
\begin{itemize}
    \item the number of voters $n$ is sufficiently large;
    \item the deviations $n_r - n/m!$ from the expected profile are small relative to $n$; and
    \item the IRV winner strongly violates the SCW condition.
\end{itemize}
For such profiles, the lemma shows that IRV is coalitionally manipulable.
Under Impartial Culture with large electorates, the first two conditions are typically satisfied, and whenever the IRV winner violates the SCW condition, this violation is typically strong.
As a result, with high probability, the absence of an SCW suffices for IRV to be coalitionally manipulable.
The following theorem formalizes this intuition.

\begin{theorem}\label{thm:asymptotic_equivalence_scw_irv_not_cm}
Under Impartial Culture, with high probability as the number of voters tends to infinity, IRV is immune to coalitional manipulation if and only if a Super Condorcet Winner exists.
\end{theorem}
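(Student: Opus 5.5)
One direction needs no probability at all: if a Super Condorcet Winner exists, then IRV is immune to coalitional manipulation by the result of Durand recalled in Section~\ref{sec:framework}. So it suffices to show that
\[
\proba_{n,m}\bigl(\text{no SCW exists and IRV is not CM}\bigr) \to 0 .
\]
I will bound this probability by the probability of a union of three ``bad'' events, each of which has vanishing probability. Outside the union, Lemma~\ref{lem:main} will force manipulability.

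\textbf{Bad events.} Fix $\delta \in (1/2, 1)$, for instance $\delta = 3/4$, and let $c$ denote the IRV winner.
\begin{itemize}
\item $E_1$: condition~\eqref{eq_close_to_mean} fails for some $r \in \rankings$.
\item $E_2$: there is a pair $(d,S)$ with $d\in S$, $|S|\ge 2$, such that
\[
\frac{n-(|S|-1)}{|S|} < N_{d\succeq S} \le \frac{n}{|S|}.
\]
This is a ``weak but not strong'' violation.
\item $E_3$ is implicit: $n \le C_{m,\delta}$, which is irrelevant for large $n$.
\end{itemize}
Suppose we are on the complement of $E_1\cup E_2$ and $n>C_{m,\delta}$, and no SCW exists. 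Then in particular $c$ is not an SCW, so some $S\ni c$ violates~\eqref{eq:scw_definition}. Since $E_2$ fails, that violation is strong. Lemma~\ref{lem:main} then gives that IRV is CM, regardless of tie-breaking. Hence the target probability is at most $\proba(E_1)+\proba(E_2)$.

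\textbf{Bounding $E_1$.} Each $N_r$ is $\mathrm{Bin}(n,1/m!)$ with variance $O(n)$. Chebyshev's inequality gives
\[
\proba\bigl(|N_r-n/m!|\ge n^\delta\bigr)=O(n^{1-2\delta}) \to 0
\]
because $\delta>1/2$. A union bound over the $m!$ rankings finishes this part. Hoeffding's inequality would give exponential decay, but it is not needed.

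\textbf{Bounding $E_2$ (main obstacle).} Here one must show that the plurality score $N_{d\succeq S}$ falls in a window of width $(|S|-1)/|S|<1$ just below $n/|S|$ with vanishing probability. Since the window has length less than~1, it contains at most one integer. Moreover $N_{d\succeq S}\sim\mathrm{Bin}(n,1/|S|)$, because exactly $m!/|S|$ rankings satisfy $d\succeq S$. The maximal point probability of a binomial with fixed success parameter $p\in(0,1)$ is $O(1/\sqrt{n})$; this follows from the local central limit theorem or from Stirling's formula at the mode. So each pair $(d,S)$ contributes $O(n^{-1/2})$. A union bound over the finitely many pairs $(d,S)$, at most $m2^m$, gives $\proba(E_2)=O(n^{-1/2})\to0$.

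The care needed in this step is only to confirm that the window really is the relevant one. It has to cover exactly the gap between a weak violation $N\le n/|S|$ and a strong violation $N\le (n-|S|+1)/|S|$. It does, by definition~\eqref{eq:strong_violation_definition}. Combining the bounds on $E_1$ and $E_2$ proves the theorem.
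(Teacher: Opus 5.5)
Your proof is correct and follows the paper's argument essentially verbatim. Your events $E_1$ and $E_2$ are exactly the paper's $\mA_1$ and $\mA_2$, the second is controlled by the same $O(1/\sqrt{n})$ maximal point probability of a binomial, and Lemma~\ref{lem:main} handles the complement. The only differences are cosmetic: you use Chebyshev's inequality (decay $O(n^{1-2\delta})$, which suffices since $\delta>1/2$) where the paper uses Chernoff, and you state explicitly that the window $\bigl(\frac{n-|S|+1}{|S|},\frac{n}{|S|}\bigr]$ has length $\frac{|S|-1}{|S|}<1$ and so contains at most one integer, a step the paper leaves implicit.
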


\begin{proof}
Fix an arbitrary $\delta \in (1/2, 1)$.
Define the event $\mA = \mA_1 \cup \mA_2$, where
\begin{align*}
    \mA_1&= \bigcup_{r \in \rankings}\left\{\, \left|N_r-\frac{n}{m!}\right|\geq n^\delta \,\right\},\\
    \mA_2&= \bigcup_{\substack{S\subseteq [m] \\ |S|\geq 2}} \bigcup_{c\in S}
    \left\{\,\frac{n - |S| + 1}{|S|} < N_{c \succeq S} \leq \frac{n}{|S|} \,\right\}.
\end{align*}
Intuitively, $\mA_1$ corresponds to profiles whose deviations from the mean are too large for Lemma~\ref{lem:main} to apply, while $\mA_2$ captures cases in which a candidate violates the SCW condition, but not strongly. We will show that $\proba_{n,m}(\mA) = o(1)$.

On the one hand, since each $N_r$ is a binomial random variable with $n$ trials and success probability $1/m!$, Chernoff's inequality \cite[Corollary~A.1.7]{alonspencer2015}, together with a union bound over all $r \in \rankings$, yields
\[
    \proba_{n,m}(\mA_1) \le 2 m! \exp(-2 n^{ 2\delta-1})  = o(1).
\]
On the other hand, each $N_{c \succeq S}$ is a binomial random variable with $n$ trials and success probability $1/|S|$.
Since the maximum point probability of a binomial distribution is $\bigO(1/\sqrt{n p (1-p)})$, a union bound over all subsets $S \subset [m]$ with $|S| \ge 2$ and all $c \in S$ gives
\[
    \proba_{n,m}(\mA_2)\leq \sum_{k=2}^m m \binom{m}{k} \bigO(1/\sqrt{n})=o(1).
\]
Therefore, $\proba_{n,m}(\mA) = o(1)$.

Since the existence of an SCW implies that IRV is immune to coalitional manipulation, it suffices to show that
\[
    \proba_{n, m}(\nexists \text{ SCW and IRV is not CM})
\]
tends to zero as $n \to \infty$.
By Lemma~\ref{lem:main}, this event is included in $\mA$, which concludes the proof.
\end{proof}



\section{Limit Probabilities}\label{sec:limit_probabilities}

Building on the results of Section~\ref{sec:asymptotic_equivalence_scw_irv}, the analysis of the limit probability that IRV is immune to coalitional manipulation reduces to studying the probability that a Super Condorcet Winner exists.
Section~\ref{sec:limit_scw_positive} proves that this probability is strictly positive by expressing it as a high-dimensional Gaussian integral.
Section~\ref{sec:limit_scw_expression} derives an alternative, lower-dimensional expression that remains computationally demanding but is more amenable to numerical computation for moderate numbers of candidates.
Both proofs are inspired by classical results on the probability of existence of a Condorcet Winner~\cite{niemi1968mathematical,krishnamoorthy2005condorcet}.
Finally, Section~\ref{sec:limit_irv_is_cm} uses these expressions to compute the limit CM rate of IRV and performs a sanity check by comparing the resulting values with prior findings and Monte Carlo simulations.

\subsection{Limit Probability of SCW: Positivity}\label{sec:limit_scw_positive}

\begin{theorem}\label{thm_limit_scw_positive}
For every integer $m \geq 2$, the limit probability that a Super Condorcet Winner exists is strictly positive:
\[
  \proba_{\infty, m}(\exists \SCW) > 0.
\]
\end{theorem}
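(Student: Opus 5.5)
The plan is to pass to the Gaussian limit via the multivariate central limit theorem. Define the normalized fluctuation vector $\vY^{(n)} = (N_r - n/m!)/\sqrt{n}$, $r\in\rankings$. It converges in distribution to a centered Gaussian vector $\vY$ with covariance $\frac{1}{m!}I - \frac{1}{(m!)^2}\mathbf{1}\mathbf{1}^\top$, which is supported on the hyperplane $\sum_r y_r = 0$. For a candidate $c$ and a set $S\ni c$ with $|S|\ge 2$, condition~\eqref{eq:scw_definition} reads
\[
\sum_{r\in\rankingswith{c\succeq S}} Y^{(n)}_r > 0 .
\]
Here we used that $|\rankingswith{c\succeq S}| = m!/|S|$, so the centering term $n/|S|$ cancels exactly. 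The event ``$c$ is an SCW'' is therefore $\{\vY^{(n)}\in U_c\}$, where $U_c$ is an open polyhedral cone, namely a finite intersection of open half-spaces through the origin. The boundary of $U_c$ lies in a finite union of hyperplanes $\{\sum_{r\in\rankingswith{c\succeq S}} y_r = 0\}$. Each such hyperplane has $\vY$-probability zero, because the linear form $\sum_{r\in\rankingswith{c\succeq S}} Y_r$ is a nondegenerate one-dimensional Gaussian. Its variance is $\frac{1}{|S|}(1-\frac{1}{|S|})>0$ since $|S|\ge 2$. By the portmanteau theorem,
\[
\proba_{n,m}(c \text{ is SCW}) \to \proba(\vY\in U_c).
\]
The events for distinct $c$ are disjoint: two SCWs $c,c'$ would both need more than half of the voters in the pair $\{c,c'\}$. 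Hence
\[
\proba_{\infty,m}(\exists \SCW) = m\,\proba(\vY\in U_1),
\]
and the limit exists.

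It remains to show $\proba(\vY\in U_1)>0$. The cleanest route is to exhibit a single vector $\vy$ in the support of $\vY$ lying in $U_1$. The support is the whole hyperplane $H=\{\sum_r y_r=0\}$, since the covariance restricted to $H$ is $\frac{1}{m!}$ times the identity. Since $U_1\cap H$ is open in $H$, a nonempty intersection gives positive Gaussian mass. For the witness, take $y_r = 1-\frac{1}{m}$ if $r$ ranks candidate $1$ first and $y_r=-\frac1m$ otherwise; this sums to zero. For any $S\ni 1$, every ranking with $1$ at the top belongs to $\rankingswith{1\succeq S}$, so the sum over that set is
\[
\frac{(m-1)!}{1}\Bigl(1-\frac{1}{m}\Bigr) - \frac{1}{m}\Bigl(\frac{m!}{|S|}-(m-1)!\Bigr)
= (m-1)!\Bigl(1-\frac{1}{|S|}\Bigr) > 0 .
\]
Thus $\vy\in U_1\cap H$. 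Intuitively, this perturbs the uniform profile slightly toward voters who rank $1$ first.

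The main obstacle is the passage to the limit: justifying that the boundary of the cone is null for the possibly degenerate Gaussian, and that the support is exactly $H$. Both reduce to the variance computations above, so I expect them to be routine once stated carefully. Uniform control via Berry--Esseen is not needed, since the portmanteau theorem suffices.
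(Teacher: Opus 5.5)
Your proof is correct and follows essentially the paper's route: the multivariate CLT reduces the limit to $m$ times the Gaussian mass of an open polyhedral cone, and positivity follows from exhibiting a point of that cone where the limiting Gaussian has full support. The differences are cosmetic. The paper drops the redundant coordinate $(1\succ\cdots\succ m)$ to get a nondegenerate Gaussian on $\Reals^{m!-1}$, so a single unit vector at $(m\succ\cdots\succ 1)$ serves as its witness. You instead keep the degenerate Gaussian supported on the hyperplane $\sum_r y_r=0$ and therefore need a centered witness, and your portmanteau/null-boundary argument makes explicit a limit-passage step the paper leaves implicit.
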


Classical results on the existence of a Condorcet winner~\cite{niemi1968mathematical,krishnamoorthy2005condorcet} imply that, for $m \geq 3$, this limit is strictly below~$1$, and that it tends to~$0$ as $m \to \infty$.

The proof of Theorem~\ref{thm_limit_scw_positive} relies on the following lemma.

\begin{lemma}\label{thm_limit_scw_expensive_formula}
Let $\rankingsprime = \rankings \setminus \{(1\!\succ\!\cdots\!\succ\!m)\}$, and let
$K \subseteq \Reals^{m! - 1}$ be the cone of vectors
$\vx = (x_r)_{r \in \rankingsprime}$ such that, for every nonempty subset $\sets \in \mathcal{P}^*([m-1])$ of opponents of candidate~$m$,
\[
    \sum_{r \in \rankingswith{m \succ \sets}} x_r > 0.
\]
Let $M$ be the positive definite matrix defined by
\[
  M \;=\; \frac{1}{m!}\, I \;-\; \frac{1}{m!^2}\, J,
\]
where $I$ denotes the identity matrix and $J$ the all-ones matrix, both of dimension
$(m!-1) \times (m!-1)$.
Then the limit probability that a Super Condorcet Winner exists is given by
\begin{equation} \label{eq_limit_proba_factorial_m}
    \proba_{\infty, m}( \exists \SCW) =
        \frac{m}{\sqrt{(2 \pi)^{m!-1} \det(M)}}
    \int_K
    e^{- \vx^T M^{-1} \vx / 2}
    d \vx.
\end{equation}
\end{lemma}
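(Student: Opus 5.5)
The proof will follow the classical multivariate central limit theorem route used for the Condorcet winner \cite{niemi1968mathematical,krishnamoorthy2005condorcet}. First, the events ``candidate $c$ is an SCW'' for distinct $c$ are pairwise disjoint. Indeed, if $c \neq d$ were both SCWs, then taking $S=\{c,d\}$ would give $n_{c\succeq S}>n/2$ and $n_{d\succeq S}>n/2$, whose sum exceeds $n$, which is impossible. By the symmetry of Impartial Culture under relabeling of candidates, it follows that
\[
\proba_{n,m}(\exists \SCW) = m\,\proba_{n,m}(m \text{ is an SCW}).
\]
This accounts for the factor $m$ in \eqref{eq_limit_proba_factorial_m}.

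Next, I rewrite the event that $m$ is an SCW. For $S\ni m$ with $|S|\ge 2$, write $S=B\cup\{m\}$ with $B\in\mP^*([m-1])$. Since $|\rankingswith{m\succ B}| = m!/|S|$, the condition $N_{m\succ B} > n/|S|$ is equivalent to
\[
\sum_{r\in\rankingswith{m\succ B}}\Bigl(N_r-\frac{n}{m!}\Bigr)>0.
\]
Define the centered, rescaled vector $\vY=(Y_r)_{r\in\rankingsprime}$ with $Y_r=(N_r-n/m!)/\sqrt n$, dropping the coordinate $r_0=(1\succ\cdots\succ m)$. The dropped coordinate is determined by $Y_{r_0} = -\sum_{r\in\rankingsprime} Y_r$, because $\sum_r N_r = n$. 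Every set $\rankingswith{m\succ B}$ either contains $r_0$ or does not. In the latter case, the condition reads directly as a linear inequality in $\vY$. Here there is a subtlety: $r_0$ ranks $m$ last, so $r_0\notin\rankingswith{m\succ B}$ for every nonempty $B$. Hence each condition is exactly $\sum_{r\in\rankingswith{m\succ B}} Y_r>0$ with $r\in\rankingsprime$, which means the event is precisely $\{\vY\in K\}$. This is why the paper removed $(1\succ\cdots\succ m)$.

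Then I apply the multivariate CLT. $\vN$ is a sum of $n$ i.i.d. vectors $\vX^{(v)}$, and the restriction of $\vX^{(v)}$ to $\rankingsprime$ has covariance
\[
\frac{1}{m!}I-\frac{1}{m!^2}J = M,
\]
since $\mathrm{Var}(X_r)=\frac1{m!}-\frac1{m!^2}$ and $\mathrm{Cov}(X_r,X_{r'})=-\frac1{m!^2}$. $M$ is positive definite: its eigenvalues are $1/m!$ on vectors orthogonal to the all-ones vector, and $\frac1{m!}-\frac{m!-1}{m!^2}=\frac1{m!^2}>0$ on the all-ones vector. Thus $\vY$ converges in distribution to $\mathcal N(0,M)$ in dimension $m!-1$, whose density is the integrand $e^{-\vx^TM^{-1}\vx/2}/\sqrt{(2\pi)^{m!-1}\det M}$.

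The only genuine obstacle is passing from convergence in distribution to convergence of $\proba(\vY\in K)$. By the portmanteau theorem, this requires $K$ to be a continuity set of the limit law, that is, its boundary must have Gaussian measure zero. $K$ is an open polyhedral cone defined by finitely many strict linear inequalities with nonzero coefficient vectors, so $\partial K$ is contained in a finite union of hyperplanes. Each such hyperplane is Lebesgue-null, and hence null for the nondegenerate Gaussian. Therefore
\[
\proba_{n,m}(\vY\in K)\to\int_K \frac{e^{-\vx^TM^{-1}\vx/2}}{\sqrt{(2\pi)^{m!-1}\det M}}\,d\vx.
\]
Multiplying by $m$ gives \eqref{eq_limit_proba_factorial_m}, and in particular shows that the limit exists.
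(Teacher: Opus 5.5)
Your proposal is correct and follows essentially the same route as the paper. Uniqueness of the SCW plus symmetry gives the factor $m$; the excluded ranking $(1\succ\cdots\succ m)$ never enters the constraints because it ranks $m$ last; and the multivariate CLT with covariance $M$ identifies the limit as the Gaussian measure of the cone $K$. You are in fact more careful than the paper on points it leaves implicit: you prove uniqueness, you verify that $M$ is positive definite, and you use the portmanteau argument that $\partial K$ is null for the nondegenerate Gaussian, which is what justifies passing from convergence in distribution to convergence of the probability of the event.
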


\begin{proof}[Proof of Lemma~\ref{thm_limit_scw_expensive_formula}]
When a Super Condorcet Winner exists, it is unique.
By symmetry among candidates,
\[
    \proba_{n,m}(\exists \SCW)
    \;=\;
    m \cdot \proba_{n,m}(\textnormal{candidate $m$ is an SCW}) .
\]

Let $\rankingsprime = \rankings \setminus \{(1\!\succ\!\cdots\!\succ\!m)\}$, which removes one redundant coordinate due to the constraint $\sum_{r \in \rankings} N_r = n$.
By the multivariate central limit theorem, the normalized vector
\[
    \hat{\vN}
    =
    \left(
        \frac{N_r - n/m!}{\sqrt{n}}
    \right)_{r \in \rankingsprime}
\]
converges in distribution, as $n \to \infty$, to a centered multivariate normal
distribution with covariance matrix~$M$.

Candidate~$m$ is an SCW if and only if
\[
  \sum_{r \in \rankingswith{m \succ \sets}} N_r
  \;>\;
  \frac{n}{|\sets| + 1},
  \qquad
  \text{for all }\sets \in \mathcal{P}^*([m-1]).
\]
These inequalities do not involve the excluded coordinate
$N_{(1\succ\cdots\succ m)}$ and may therefore be restricted to rankings in
$\rankingsprime$.
After centering and normalization, these conditions become
\[
    \sum_{r \in \rankingsprimewith{m \succ \sets}} \hat{N}_r > 0
    \qquad
  \text{for all }\sets \in \mathcal{P}^*([m-1]).
\]
Therefore, in the limit, candidate~$m$ is the SCW if and only if the Gaussian vector
$\hat{\vN}$ lies in the cone~$K$.
\end{proof}

\begin{proof}[Proof of Theorem~\ref{thm_limit_scw_positive}]
Consider the vector $\hat{\vN}$ defined by $\hat{N}_r = 1$ if $r = (m \succ \cdots \succ 1)$, $\hat{N}_r = -1$ if $r = (1 \succ \cdots \succ m)$, and $\hat{N}_r = 0$ otherwise.
This vector strictly satisfies all inequalities defining~$K$ and therefore lies in the interior of~$K$.
Hence, $K$ has positive Lebesgue measure in $\Reals^{m!-1}$.
Since the limiting distribution in~\eqref{eq_limit_proba_factorial_m} admits a continuous density that is strictly positive everywhere, the corresponding probability is strictly positive.
\end{proof}

\subsection{Limit Probability of SCW: Expression}\label{sec:limit_scw_expression}

The expression~\eqref{eq_limit_proba_factorial_m} given in Lemma~\ref{thm_limit_scw_expensive_formula} is difficult to compute: the integral has dimension $m!-1$, and its domain $K$ is defined by a system of inequalities.
The next theorem addresses both issues by deriving an alternative representation in dimension
$d = 2^{m-1}-1$, which remains exponential in~$m$ but is significantly more tractable, and whose domain of integration is the positive orthant $(0,\infty)^d$.


\begin{theorem}
\label{th_simpler_expression_limit_proba}
Assume $m \geq 2$, and let $d = 2^{m-1} - 1$ be the cardinality of $\mP^*([m-1])$, that is, the number of nonempty subsets of opponents of candidate~$m$.
Let $H$ be the matrix indexed by $\mP^*([m-1])$, defined for any $\sets, \sett \in \mP^*([m-1])$ by
\[
    H_{\sets, \sett} \;=\;
    \frac{1}{|\sets \cup \sett| + 1}
    - \frac{1}{|\sets| + 1} \cdot \frac{1}{|\sett| + 1}.
\]
Then the limit probability that a Super Condorcet Winner exists is given by
\begin{equation} \label{eq_limit_proba_dim_subsets}
    \proba_{\infty, m}(\exists \SCW) \!=\!
    \frac{m}{\sqrt{(2 \pi)^d \det(H)}}
    \!\int_{(0, +\infty)^d}\!
    e^{- \vx^T H^{-1} \vx / 2}
    d \vx.
\end{equation}
\end{theorem}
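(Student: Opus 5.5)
Instead of the $(m!-1)$-dimensional vector $\hat{\vN}$, I will work directly with the $d$ statistics that enter the SCW condition for candidate~$m$. For each $\sets \in \mP^*([m-1])$, define
\[
    Y_\sets \;=\; \frac{N_{m \succ \sets} - n/(|\sets|+1)}{\sqrt{n}} .
\]
As in the proof of Lemma~\ref{thm_limit_scw_expensive_formula}, symmetry among candidates and uniqueness of the SCW give $\proba_{n,m}(\exists \SCW) = m\,\proba_{n,m}(m \text{ is an SCW})$. Moreover, candidate~$m$ is an SCW exactly when $Y_\sets>0$ for all $\sets$. So the goal is to identify the limit law of $\vY=(Y_\sets)_{\sets}$ and show that it is a centered Gaussian with covariance $H$, with $H$ nondegenerate.

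\textbf{Covariance computation.} I write $N_{m\succ \sets}=\sum_{v\in[n]} Z^{(v)}_\sets$, where $Z^{(v)}_\sets$ indicates that voter $v$ ranks $m$ above every candidate of $\sets$. Under IC, $\proba(Z^{(v)}_\sets=1)=1/(|\sets|+1)$, because $m$ is equally likely to be top among the $|\sets|+1$ candidates of $\sets\cup\{m\}$. The product $Z^{(v)}_\sets Z^{(v)}_\sett$ is the indicator that $m$ beats all of $\sets\cup\sett$, so its mean is $1/(|\sets\cup\sett|+1)$. Hence the single-voter covariance is exactly $H_{\sets,\sett}$. The vectors $(Z^{(v)}_\sets)_\sets$ are i.i.d.\ across voters, so the multivariate CLT gives $\vY \to \mathcal{N}(0,H)$ in distribution. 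Equivalently, $\vY = A\hat{\vN}$ for the linear map $A$ implicit in Lemma~\ref{thm_limit_scw_expensive_formula}, and $H = AMA^T$. Either route also clarifies the relation to the cone $K$: $K = A^{-1}\big((0,\infty)^d\big)$.

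\textbf{Positive definiteness of $H$ (main obstacle).} This is the step I expect to need real work, since the density in~\eqref{eq_limit_proba_dim_subsets} only makes sense if $H$ is invertible. I would argue that no nontrivial combination $\sum_\sets \lambda_\sets Z_\sets$ is almost surely constant. I would argue by induction on $|\sets|$ via an inclusion–exclusion basis: the indicators $\mathbf{1}[m \succ \sets]$, for $\sets \in \mP([m-1])$ including $\emptyset$ (whose indicator is the constant $1$), are linearly independent as functions on $\rankings$. To see this, pick $\sets$ maximal with $\lambda_\sets\ne0$. Evaluate the combination on rankings where $m$ sits just above exactly the set $\sets$ and below everything else, versus rankings where $m$ is placed just below one element of $\sets$. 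The Möbius-type triangularity in the "set of candidates ranked below $m$" then forces $\lambda_\sets=0$. Concretely, the function $r\mapsto \mathbf{1}[m\succ\sets]$ depends only on $D(r)=\{b: m\succ b\}$ and equals $\mathbf{1}[\sets\subseteq D(r)]$. Zeta-transform functions of this form on the Boolean lattice are linearly independent, and every $D\subseteq[m-1]$ is realized by some ranking. Therefore $\mathrm{Var}(\sum\lambda_\sets Z_\sets)>0$ unless $\lambda=0$, and $H$ is positive definite.

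\textbf{Conclusion.} The limit law of $\vY$ has a continuous density, and the orthant boundary $\{\vx : \exists \sets,\ x_\sets=0\}$ has Lebesgue measure zero, hence probability zero under the limit. The portmanteau theorem then gives
\[
\proba_{n,m}\big(\vY\in(0,\infty)^d\big)\to\proba\big(\mathcal N(0,H)\in(0,\infty)^d\big).
\]
Multiplying by $m$ and writing the Gaussian density explicitly yields~\eqref{eq_limit_proba_dim_subsets}.
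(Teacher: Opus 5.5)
Your proposal is correct and follows the paper's proof essentially step by step. You use the same single-voter indicator covariance computation yielding $H$ and the same multivariate CLT. Your positive-definiteness argument is the paper's in different clothing: including $S=\emptyset$ as the constant function plays the role of the paper's ranking with $m$ last, which forces $C=0$. The unitriangularity of the zeta matrix is the paper's induction on $|T|$ using rankings ending in $m \succ d_1 \succ \cdots \succ d_k$. Your portmanteau remark on the null boundary of the orthant makes explicit a step the paper leaves implicit.

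One small slip: the one-point triangular evaluation isolates $\lambda_S$ when $S$ is chosen \emph{minimal} (evaluate at $D(r)=S$), not maximal. Your concrete zeta-transform formulation, with surjectivity of $r \mapsto D(r)$, already settles this correctly.
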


\begin{proof}
Let $\sets, \sett \in \mP^*([m-1])$, and let $r$ be a ranking drawn uniformly at random, interpreted as the preference of a single voter.
Define the indicator random variable $Y_{m \succ \sets}$ to be $1$ if $r \in \mR_{m \succ \sets}$, and $0$ otherwise.
Denoting by $\mean[\cdot]$ the expectation, symmetry among the candidates in $\sets \cup \{m\}$ and in $\sets \cup \sett \cup \{m\}$ yields
\[
    \mean(Y_{m \succ \sets})\!=\!\frac{1}{|\sets| + 1}
    \;\;\text{and}\;\; \mean(Y_{m \succ \sets} \, Y_{m \succ \sett})\!=\!\frac{1}{|\sets \cup \sett| + 1}.
\]
Denoting by $H_{\sets, \sett}$ the covariance of $Y_{m \succ \sets}$ and $Y_{m \succ \sett}$, it follows that
\[
    H_{\sets, \sett}
    = \frac{1}{|\sets \cup \sett| + 1}
       - \frac{1}{|\sets| + 1} \cdot \frac{1}{|\sett| + 1}.
\]

Fix an arbitrary ordering of $\mP^*([m-1])$.
Let $\vY$ be the $d$-dimensional vector with entries $Y_{m \succ \sets}$, and let $\hat{\vY}$ be its centered version with entries $Y_{m \succ \sets} - \tfrac{1}{|\sets|+1}$.
Let $H$ denote the covariance matrix of $\vY$, with entries $H_{\sets,\sett}$.
As a covariance matrix, $H$ is positive semidefinite; we now show that it is in fact positive definite.

Consider a vector $\vy$ such that $\vy^T H \vy = 0$. Then
\[
  0
  =
  \vy^T \mean\!\left[\hat{\vY}\hat{\vY}^T\right] \vy
  =
  \mean\!\left[(\vy^T \hat{\vY})^2\right],
\]
which implies that $\vy^T \hat{\vY} = 0$ almost surely.
Equivalently, there exists a constant $C$ such that
$\vy^T \vY = \sum_\sets y_\sets Y_{m\succ \sets} = C$ almost surely.
Taking a ranking in which candidate~$m$ is ranked last yields $\vY = \vzero$, and therefore $C=0$.

Let $\sett \in \mP^*([m-1])$ with $|\sett| = k \geq 1$, and assume by induction that
$y_{\sets}=0$ for all $\sets \in \mP^*([m-1])$ such that $|\sets| < k$ (for $k=1$, this condition is void).
Consider a ranking whose last $k+1$ positions are
\(
(m \succ d_1 \succ \cdots \succ d_k),
\)
where $\{d_1,\ldots,d_k\}=\sett$.
For this ranking, we have $Y_{m \succ \sets}=1$ if and only if $\varnothing \neq \sets \subseteq \sett$. Since $\sum_{\sets} y_{\sets} Y_{m \succ \sets}=0$ almost surely, it follows that
\[
  0
  \;=\;
  \sum_{\sets} y_{\sets} Y_{m \succ \sets}
  \;=\;
  \sum_{\varnothing \neq \sets \subseteq \sett} y_{\sets}.
\]
By the induction hypothesis, all terms in the sum except $y_{\sett}$ vanish, hence $y_{\sett}=0$.

Since this holds for every $\sett \in \mP^*([m-1])$, we conclude that $\vy=\vzero$, and thus $H$ is positive definite.

Finally, consider a random profile drawn from Impartial Culture with $n$ voters and $m$ candidates.
For each $\sets \in \mP^*([m-1])$, we have
\[
  N_{m \succ \sets} = \sum_{v=1}^n X_{m \succ \sets}^{(v)},
  \qquad
  \hat{N}_{m \succ \sets} = \frac{N_{m \succ \sets} - n/(|\sets|+1)}{\sqrt{n}},
\]
where the random vectors $\bigl(X_{m \succ \sets}^{(v)}\bigr)_{\sets \in \mP^*([m-1])}$ are i.i.d. and follow the same distribution as~$\vY$.
By the multivariate central limit theorem, the vector $(\hat{N}_{m \succ \sets})_{\sets \in \mP^*([m-1])}$ converges in distribution to a centered Gaussian vector with covariance matrix~$H$.

The SCW condition for candidate~$m$ is that $N_{m \succ \sets}$ exceeds its expectation for all $\sets \in \mP^*([m-1])$, which is equivalently expressed as $\hat{N}_{m \succ \sets} > 0$ for all such~$\sets$.
Hence, the limit probability that candidate~$m$ is an SCW is given by the Gaussian measure of the positive orthant in $\Reals^d$.
Multiplying by~$m$ then yields the expression stated in Equation~\eqref{eq_limit_proba_dim_subsets}.
\end{proof}

\subsection{Limit Probability of IRV Susceptibility to Coalitional Manipulation}\label{sec:limit_irv_is_cm}

By Theorem~\ref{thm:asymptotic_equivalence_scw_irv_not_cm}, the limit probability that IRV is immune to coalitional manipulation coincides with the limit probability that a Super Condorcet Winner exists.
Theorems~\ref{thm_limit_scw_positive} and~\ref{th_simpler_expression_limit_proba} further provide explicit characterizations of this probability.
Together, these results yield the following corollary.

\begin{corollary}\label{thm_proba_irv_cm}
The limit probability that IRV is susceptible to coalitional manipulation satisfies the following properties:
\begin{enumerate}
  \item For every integer $m \geq 2$,
  \[
    \proba_{\infty,m}(\textnormal{IRV is CM}) < 1.
  \]

  \item With the notation of Theorem~\ref{th_simpler_expression_limit_proba},
  \[
  \begin{aligned}
    &\proba_{\infty, m}(\textnormal{IRV is CM}) \\
    &=
    1 - \frac{m}{\sqrt{(2 \pi)^d \det(H)}}
    \int_{(0, +\infty)^d}
    e^{- \vx^T H^{-1} \vx / 2}
    d \vx.
  \end{aligned}
  \]
\end{enumerate}
\end{corollary}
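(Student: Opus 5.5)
The corollary follows by combining Theorem~\ref{thm:asymptotic_equivalence_scw_irv_not_cm} with the limit computations of Section~\ref{sec:limit_probabilities}. The first step is to make precise that the limit of $\proba_{n,m}(\textnormal{IRV is CM})$ exists and equals $1 - \proba_{\infty,m}(\exists \SCW)$.

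Since the existence of an SCW implies that IRV is not CM, the events satisfy $\{\textnormal{IRV is not CM}\} = \{\exists \SCW\} \cup \{\nexists \SCW \text{ and IRV is not CM}\}$, and the union is disjoint. The proof of Theorem~\ref{thm:asymptotic_equivalence_scw_irv_not_cm} shows that the second event is contained in $\mA$, with $\proba_{n,m}(\mA) = o(1)$. Hence
\[
\proba_{n,m}(\textnormal{IRV is not CM}) = \proba_{n,m}(\exists \SCW) + o(1).
\]
By Lemma~\ref{thm_limit_scw_expensive_formula}, or equivalently Theorem~\ref{th_simpler_expression_limit_proba}, $\proba_{n,m}(\exists \SCW)$ converges. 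Therefore $\proba_{n,m}(\textnormal{IRV is CM})$ also converges, and its limit is $1-\proba_{\infty,m}(\exists\SCW)$.

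Item~2 is this identity with $\proba_{\infty,m}(\exists\SCW)$ replaced by formula~\eqref{eq_limit_proba_dim_subsets}. Item~1 follows from Theorem~\ref{thm_limit_scw_positive}, since $\proba_{\infty,m}(\exists\SCW)>0$ gives $1-\proba_{\infty,m}(\exists\SCW)<1$.

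The only delicate point is the justification of the limit in Theorem~\ref{th_simpler_expression_limit_proba}. The central limit theorem gives convergence in distribution, and we apply it to the open orthant $(0,\infty)^d$, whose boundary has Gaussian measure zero because $H$ is positive definite. The portmanteau theorem then turns convergence in distribution into convergence of probabilities. The paper uses this implicitly, and I would make it explicit here. For $m=2$ everything is consistent: the SCW is then the majority winner, which fails to exist only when there is an exact tie, an event of probability $o(1)$.
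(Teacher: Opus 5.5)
Your proposal is correct and follows the paper's own argument. You combine Theorem~\ref{thm:asymptotic_equivalence_scw_irv_not_cm} (giving $\proba_{n,m}(\textnormal{IRV is not CM}) = \proba_{n,m}(\exists \SCW) + o(1)$) with the positivity in Theorem~\ref{thm_limit_scw_positive} and the orthant formula of Theorem~\ref{th_simpler_expression_limit_proba}, and your two additions are sound refinements of steps the paper leaves implicit: deducing that the limit exists from the disjoint decomposition, and using the portmanteau theorem, justified because the boundary of $(0,\infty)^d$ is null for a nondegenerate Gaussian.
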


For $m \geq 3$, this limit is also strictly positive and tends to~$1$ as $m \to \infty$, by symmetry with the corresponding result for the existence of a Super Condorcet Winner, which itself follows from classical results on the Condorcet winner.


Note that Corollary~\ref{thm_proba_irv_cm} applies identically to \emph{Exhaustive Ballot}, the round-by-round implementation of IRV, whose CM rate is lower bounded by the one of IRV and upper bounded by the probability that no SCW exists~\cite{durand2025irv}.

Table~\ref{tab:numerical_values_limits_irv} reports the limit probability $\proba_{\infty, m}(\textnormal{IRV is CM})$ for several values of~$m$, obtained by numerically evaluating the integral expression of Corollary~\ref{thm_proba_irv_cm} (see the supplementary material for implementation details and documentation).
For $m=3$, we verify that the resulting value is consistent with the findings of Lepelley and Valognes~\shortcite{lepelley1999kimroush}.

%

\begin{table}[ht]
    \centering
\hfill\begin{minipage}{0.45\linewidth}
\centering
\begin{tabular}{r r}
\toprule
$m$ & $\proba_{\infty,m}(\textnormal{IRV is CM})$ \\
\midrule
3 & $0.1688 \pm 0.0013$ \\
4 & $0.3472 \pm 0.0015$ \\
5 & $0.4986 \pm 0.0015$ \\
6 & $0.6219 \pm 0.0015$ \\
7 & $0.7156 \pm 0.0014$ \\
8 & $0.7850 \pm 0.0013$ \\
\bottomrule
\end{tabular}
\end{minipage}
\hfill
\begin{minipage}{0.45\linewidth}
\centering
\begin{tabular}{r r}
\toprule
$m$ & $\proba_{\infty,m}(\textnormal{IRV is CM})$ \\
\midrule
9  & $0.8381 \pm 0.0012$ \\
10 & $0.8795 \pm 0.0011$ \\
11 & $0.9105 \pm 0.0010$ \\
12 & $0.9328 \pm 0.0009$ \\
13 & $0.9500 \pm 0.0008$ \\
14 & $0.9620 \pm 0.0007$ \\
\bottomrule
\end{tabular}
\end{minipage}\hfill
    \caption{Numerical estimates of the limit probability that IRV is susceptible to coalitional manipulation, as given by Corollary~\ref{thm_proba_irv_cm}. Each value is reported with an error bound reflecting the numerical integration.}
    \label{tab:numerical_values_limits_irv}
\end{table}

\begin{figure}[ht]
    \centering
\begin{tikzpicture}

\definecolor{crimson2143940}{RGB}{214,39,40}
\definecolor{darkgrey176}{RGB}{176,176,176}
\definecolor{darkorange25512714}{RGB}{255,127,14}
\definecolor{forestgreen4416044}{RGB}{44,160,44}
\definecolor{grey127}{RGB}{127,127,127}
\definecolor{lightgrey204}{RGB}{204,204,204}
\definecolor{mediumpurple148103189}{RGB}{148,103,189}
\definecolor{orchid227119194}{RGB}{227,119,194}
\definecolor{sienna1408675}{RGB}{140,86,75}
\definecolor{steelblue31119180}{RGB}{31,119,180}

\begin{axis}[reverse legend,
height=\axisHeight,
legend cell align={left},
legend style={font=\legendFont, 
  fill opacity=1,
  draw opacity=1,
  text opacity=1,
  at={(0.03,0.97)},
  anchor=north west,
  draw=lightgrey204
},
log basis x={10},
tick align=outside,
tick pos=left,
width=\axisWidth,
x grid style={darkgrey176},
xlabel={Number of voters $n$},
xmin=10, xmax=1000,
xmode=log,
xtick style={color=black},
xmode=log,
y grid style={darkgrey176},
ylabel={$\mathbb{P}_{n, m}(\text{IRV is CM})$},
ymajorgrids,
ymin=-0.05, ymax=1.05,
ytick={0.0, 0.1, 0.2, 0.30000000000000004, 0.4, 0.5, 0.6000000000000001, 0.7000000000000001, 0.8, 0.9, 1.0},
ytick style={color=black}
]
\addplot [semithick, steelblue31119180]
table {%
10 0.12024
30 0.14234
100 0.15423
300 0.16099
1000 0.1631
};
\addlegendentry{$m=3$}
\addplot [semithick, darkorange25512714]
table {%
10 0.24753
30 0.29642
100 0.32263
300 0.32966
1000 0.33734
};
\addlegendentry{$m=4$}
\addplot [semithick, forestgreen4416044]
table {%
10 0.36216
30 0.43004
100 0.46408
300 0.48026
1000 0.48627
};
\addlegendentry{$m=5$}
\addplot [semithick, crimson2143940]
table {%
10 0.45701
30 0.54239
100 0.58159
300 0.59582
1000 0.61005
};
\addlegendentry{$m=6$}
\addplot [semithick, mediumpurple148103189]
table {%
10 0.53702
30 0.63095
100 0.66954
300 0.68748
1000 0.70236
};
\addlegendentry{$m=7$}
\addplot [semithick, sienna1408675]
table {%
10 0.60398
30 0.69997
100 0.7442
300 0.75861
1000 0.77286
};
\addlegendentry{$m=8$}
\addplot [semithick, orchid227119194]
table {%
10 0.65643
30 0.75551
100 0.79684
300 0.81812
1000 0.82803
};
\addlegendentry{$m=9$}
\addplot [semithick, grey127]
table {%
10 0.6998
30 0.80206
100 0.84274
300 0.85829
1000 0.86833
};
\addlegendentry{$m=10$}
\path [draw=steelblue31119180, semithick, dash pattern=on 5.55pt off 2.4pt]
(axis cs:10,0.16882)
--(axis cs:1000,0.16882);

\path [draw=darkorange25512714, semithick, dash pattern=on 5.55pt off 2.4pt]
(axis cs:10,0.347156)
--(axis cs:1000,0.347156);

\path [draw=forestgreen4416044, semithick, dash pattern=on 5.55pt off 2.4pt]
(axis cs:10,0.498575)
--(axis cs:1000,0.498575);

\path [draw=crimson2143940, semithick, dash pattern=on 5.55pt off 2.4pt]
(axis cs:10,0.62185)
--(axis cs:1000,0.62185);

\path [draw=mediumpurple148103189, semithick, dash pattern=on 5.55pt off 2.4pt]
(axis cs:10,0.715569)
--(axis cs:1000,0.715569);

\path [draw=sienna1408675, semithick, dash pattern=on 5.55pt off 2.4pt]
(axis cs:10,0.785016)
--(axis cs:1000,0.785016);

\path [draw=orchid227119194, semithick, dash pattern=on 5.55pt off 2.4pt]
(axis cs:10,0.838144)
--(axis cs:1000,0.838144);

\path [draw=grey127, semithick, dash pattern=on 5.55pt off 2.4pt]
(axis cs:10,0.87949)
--(axis cs:1000,0.87949);

\end{axis}

\end{tikzpicture}
    \caption{Probability that IRV is susceptible to coalitional manipulation as a function of~$n$, for various values of~$m$. Each point is estimated by Monte Carlo simulation with $100{,}000$ samples, yielding an error of order $1/\sqrt{100{,}000}\approx 0.3\%$. Simulations rely on the Python package SVVAMP. The theoretical limits from Table~\ref{tab:numerical_values_limits_irv} are shown as dashed lines.}
    \label{fig:irv_cm_rates}
\end{figure}

Figure~\ref{fig:irv_cm_rates} provides an additional sanity check by comparing our theoretical predictions with Monte Carlo experiments using SVVAMP~\cite{durand2016svvamp,svvamp}, a Python package dedicated to studying manipulability of voting rules.
The close agreement between the empirical curves and the predicted limits supports the validity of our results.


\section{Future Work}\label{sec:future_work}

A natural next step is to extend our analysis beyond IRV and determine the limiting CM rate under Impartial Culture for a broader class of classical voting rules, thereby filling the gaps left by Kim and Roush \shortcite{kim1996manipulability}.
Their work also introduces the notion of the smallest achievable CM rate among anonymous and neutral rules that are well-defined on a proportion of profiles tending to one.
It is therefore natural to ask whether IRV attains this theoretical lower bound, about which little is currently known.
If not, identifying voting rules that achieve it would be of particular interest, especially in light of their empirical performance on real-world preference datasets.

\appendix

%
%
\section*{Acknowledgments}

E.d.P. and G.P. acknowledge support from the RandNET project (MSCA-RISE — Marie Skłodowska-Curie Research and Innovation Staff Exchange Programme, Grant Agreement No. 101007705), which funded the research secondments during which this project was initiated and developed.

G.P. was supported by the project PID2023-147202NB-I00 funded by MICIU/AEI/10.13039/501100011033/.

\bibliographystyle{named}
\bibliography{main}

\showIJCAIorHAL{}{%
\clearpage

In this technical appendix, we provide the proofs of Theorem~{\pluralityWithRunoffTheoremRef} and Lemma~{\arithmeticalLemmaRef}. For the reader’s convenience, we restate these results before presenting their proofs.

\section{Plurality with Runoff}\label{sec:appendix_pr}

We first prove Theorem~{\pluralityWithRunoffTheoremRef}, establishing the limit CM rate of Plurality with Runoff for $m \ge 4$.

\RestateTheoremNumber{\pluralityWithRunoffTheoremRef}
\begin{theorem}
\pluralityWithRunoffTheorem    
\end{theorem}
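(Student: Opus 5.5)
The plan is to show that every profile close enough to the uniform profile is coalitionally manipulable under Plurality with Runoff. The weak law of large numbers then finishes the argument. Fix $m \ge 4$ and a small $\varepsilon>0$. Let $\vn$ be a profile with $\|\vn/n - (1/m!,\ldots,1/m!)\|_1 \le \varepsilon$. Each aggregate count $n_{d\succeq S}$, $n_{d\succ e}$ and the like is a sum over a fixed set of rankings, so it equals its uniform value up to an error of at most $\varepsilon n$. In particular every plurality score is $n/m \pm \varepsilon n$. Let $c$ be the PR winner at $\vn$, and fix two other candidates $a \neq b$. The manipulators will be the voters with $a \succ c$; they all prefer $a$ to $c$, so it suffices to make $a$ win.

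\textbf{Construction.} Fix a constant $\alpha$ with $2/3 < \alpha < 3/4 \le (m-1)/m$; such an $\alpha$ exists precisely because $m\ge4$. Among the voters with $a\succ c$ (about $n/2$ of them):
\begin{itemize}
\item $\lfloor \alpha\, n_{a\succ c}\rfloor$ of them report $a$ first, and these should include everyone whose sincere top is $a$;
\item the rest report $b \succ a \succ \cdots$.
\end{itemize}
All other ballots, namely those with $c\succ a$, stay sincere. Up to $O(\varepsilon n)+O(1)$, the first-round scores in the target profile $\vn'$ are then:
\begin{itemize}
\item $a$: about $\alpha n/2$;
\item $b$: about $(1-\alpha)n/2 + n/(2m)$, since the sincere $b$-tops with $c\succ a$, roughly half of the $n/m$ $b$-tops, keep voting $b$;
\item $c$: about $n/m$, since no $c$-top voter is a manipulator;
\item each other $d$: about $n/(2m)$.
\end{itemize}

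\textbf{Checking the outcome.} I will verify three strict inequalities with margins linear in $n$.
\begin{itemize}
\item $a$ beats $c$ and every $d$ in the first round: $\alpha/2 > 1/m$, which holds since $\alpha>2/3\ge 2/m$.
\item $b$ beats $c$ and every $d$: $(1-\alpha)/2 + 1/(2m) > 1/m$, which is equivalent to $\alpha<(m-1)/m$.
\item $a$ beats $b$ in the runoff: $a$ gets the $\alpha n/2$ manipulators reporting $a$ first, plus about $n/4$ sincere voters with $c\succ a$ and $a\succ b$. $b$ gets the $(1-\alpha)n/2$ manipulators reporting $b$ first, plus about $n/4$ such voters with $b\succ a$. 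So $a$ wins iff $\alpha>1/2$.
\end{itemize}
Each margin is $\Theta(n)$. Choosing $\varepsilon$ small relative to these margins and $n$ large enough to absorb rounding makes every comparison strict, so the result is independent of tie-breaking. Since $c\neq a$ was the sincere winner and every voter who changed ballot has $a\succ c$, the profile $\vn$ is CM.

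\textbf{Conclusion and main difficulty.} Under Impartial Culture, $\vN/n \to$ uniform in probability, so $\proba_{n,m}(\|\vN/n-\text{uniform}\|_1\le\varepsilon)\to1$, giving the limit $1$. The main obstacle is the bookkeeping of the approximations. Every relevant count, such as $b$-tops with $c\succ a$, and voters with $c \succ a$ and $a\succ b$, must be shown to be within $O(\varepsilon n)$ of its uniform value. I would handle this uniformly by noting that each is a sum of $n_r$ over a fixed set of rankings. I must also make sure the integer rounding of $\alpha n_{a\succ c}$ and the requirement that the $a$-tops sit in the first group are compatible. This holds since $a$-tops number about $n/m < \alpha n/2$.
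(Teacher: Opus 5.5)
Your construction is the paper's own: the same manipulators (voters with $a\succ c$), the same split of their first choices with $2/3<\alpha<(m-1)/m$, the same first-round estimates, and the same law-of-large-numbers finish. Your first-round checks are correct. However, the runoff verification contains a miscount, and the criterion you state there is false.

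Among the sincere voters with $c\succ a$ (about $n/2$), the restriction of the ranking to $\{a,b,c\}$ is uniform over $c\succ a\succ b$, $c\succ b\succ a$ and $b\succ c\succ a$. So only about $n/6$ of them prefer $a$ to $b$, and about $n/3$ prefer $b$ to $a$. The events $c\succ a$ and $a\succ b$ share candidate $a$ and are not independent, so the two groups are not $n/4$ each. The correct runoff totals are
\[
\alpha\, n/2 + n/6 \quad\text{for } a, \qquad (1-\alpha)\, n/2 + n/3 \quad\text{for } b.
\]
Hence $a$ wins if and only if $\alpha>2/3$, not if and only if $\alpha>1/2$. This matches the paper's count $n'_{a\succ b}=n/6+\alpha n/2$. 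The error is not cosmetic. For example, $\alpha=0.6$ satisfies your stated criterion and your first-round checks for every $m\ge 4$, yet $b$ wins the runoff with about $0.53\,n$ votes. That outcome is not a valid manipulation, since some voters with $a\succ c$ prefer $c$ to $b$.

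Your argument survives only because you fixed $\alpha>2/3$ at the outset. Your own reasoning never needed that bound: your first check only requires $\alpha>2/m\le 1/2$, and your runoff claim only $\alpha>1/2$. The $2/3$ is exactly the true runoff threshold. To fix the proof, replace the runoff paragraph with the correct counts. The runoff margin is then $(\alpha-2/3)n$, still linear in $n$, so your $O(\varepsilon n)+O(1)$ bookkeeping and the rest of the proof go through as in the paper. Requiring the sincere $a$-tops to be in the first group is harmless but unnecessary, since any voter with $a\succ c$ may deviate in any way.
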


\begin{proof}
Fix $m \geq 4$, $\varepsilon > 0$, and $n > 0$, and consider a profile~$\vn$ whose normalized vector $\vn / n$ lies within $L_1$-distance at most~$\varepsilon$ from the uniform profile $(1/m!,\ldots,1/m!)$.
Let $c$ denote the winner of Plurality with Runoff at~$\vn$.
We show that, for $\varepsilon$ sufficiently small and $n$ large enough, the profile~$\vn$ is coalitionally manipulable in favor of any other candidate.

Let $a$ and $b$ be two distinct candidates, both different from~$c$.
For $\alpha \in [0,1]$, we define a target profile~$\vn'$ as follows.
In the original profile~$\vn$, the number of voters who prefer~$a$ to~$c$ satisfies
\[
n_{a \succ c} = \frac{n}{2} \pm n \varepsilon
\]
Among these voters, we ask $\lfloor \alpha n_{a \succ c} \rfloor$ of them to report~$a$ as their first choice, and the remaining ones to report~$b$ as their first choice.
All other ballots remain unchanged.
In the resulting profile~$\vn'$, we have
\begin{align*}
n'_{a \succ [m]} &= \alpha \frac{n}{2} \pm n \varepsilon \pm 1, \\
n'_{b \succ [m]} &= \frac{n}{2m} + (1 - \alpha) \frac{n}{2} \pm n \varepsilon \pm 1, \\
n'_{c \succ [m]} &= \frac{n}{m} \pm n \varepsilon, \\
n'_{d \succ [m]} &= \frac{n}{2m} \pm n \varepsilon \quad \textrm{for all $d \notin \{a, b, c\}$}, \\
n'_{a \succ b} &= \frac{n}{6} + \alpha \frac{n}{2} \pm n \varepsilon \pm 1.
\end{align*}

Now assume that 
\[
\frac{2}{3} < \alpha < \frac{m-1}{m},
\]
which is feasible since $m \geq 4$.
For $\varepsilon$ sufficiently small and $n$ large enough, we obtain the following conclusions.
\begin{itemize}
\item Since $\alpha > \frac{2}{m}$, we have $n'_{a \succ [m]} > n'_{c \succ [m]}$.
\item Since $\alpha < \frac{m - 1}{m}$, we have $n'_{b \succ [m]} > n'_{c \succ [m]}$.
\item Moreover, $n'_{c \succ [m]} > n'_{d \succ [m]}$ for all $d \notin \{a,b,c\}$.
\item As a consequence, candidates $a$ and $b$ advance to the runoff.
\item Finally, since $\alpha > \frac{2}{3}$, we have $n'_{a \succ b} > \frac{n}{2}$, and candidate~$a$ wins the runoff.
\end{itemize}

Therefore, for $n$ large enough, any profile sufficiently close to the uniform profile is coalitionally manipulable under Plurality with Runoff.
By the weak law of large numbers, the probability of such profiles tends to~$1$ as $n \to \infty$.
\end{proof}

\section{Arithmetical Lemma}\label{sec:appendix_arithmetic}

We now prove Lemma~{\arithmeticalLemmaRef}, which motivates the definition of strongly violating the SCW condition.

\RestateLemmaNumber{\arithmeticalLemmaRef}
\begin{lemma}
\arithmeticalLemma
In addition, every such $q_j$ satisfies
\begin{equation*}
    q_j \;<\; p \;+\; 2 \left(\frac{n}{\ell + 1} - p\right) + 1.
\end{equation*}%
\end{lemma}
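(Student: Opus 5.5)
The plan is to use the explicit balanced solution named in the statement and check three things: the parts sum to $n$, every part exceeds $p$, and every part obeys the upper bound. Set $N = n - p$, write $N = \ell Q - k$ with $Q = \lceil N/\ell \rceil$, and take $0 \le k < \ell$ (this $k$ exists by definition of the ceiling). Assign $q_j = Q-1$ for $j \le k$ and $q_j = Q$ for $j > k$. Then
\[
\sum_j q_j = k(Q-1) + (\ell-k)Q = \ell Q - k = N,
\]
so $q_1+\cdots+q_\ell+p = n$, which settles existence and the description of $k$ at once.

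For the lower bound $q_j > p$, the smallest part is at least $Q - 1$ when $k \ge 1$, and equals $Q$ when $k = 0$. From $\ell Q - k = N$ we get $Q - 1 \ge (N-k)/\ell - 1 + k/\ell$; it is simpler to use $Q-1 \ge N/\ell - 1 + 1/\ell \cdot 0$, that is, $Q - 1 \ge \lceil N/\ell\rceil - 1 > N/\ell - 1$. So it suffices that $N/\ell - 1 \ge p$, i.e.\ $n - p \ge \ell p + \ell$, i.e.\ $p \le (n-\ell)/(\ell+1)$, which is exactly the hypothesis. Strictness comes from $\lceil x\rceil - 1 \ge x - 1$ combined with integrality: the hypothesis gives $N/\ell \ge p+1$, hence $Q \ge p+1$, and so $Q - 1 \ge p$. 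That is only $\ge$, so care is needed here.

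This is the main obstacle. If $k \ge 1$, then $N = \ell Q - k < \ell Q$, so $Q > N/\ell \ge p+1$, and the integer $Q$ satisfies $Q \ge p+2$, giving $Q - 1 \ge p+1 > p$. If $k = 0$, every part equals $Q = N/\ell \ge p+1 > p$. Either way all parts exceed $p$.

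For the upper bound, interpret "every such $q_j$" as the parts of this particular solution. Each part satisfies $q_j \le Q < N/\ell + 1$. It therefore suffices to show $N/\ell \le p + 2\bigl(n/(\ell+1) - p\bigr)$, that is,
\[
\frac{n-p}{\ell} \le \frac{2n}{\ell+1} - p.
\]
Multiplying by $\ell(\ell+1)$, this becomes $(\ell+1)(n-p) \le 2\ell n - \ell(\ell+1)p$, i.e.\ $(1-\ell)n \le (\ell+1)(1-\ell)p$, i.e.\ $(\ell-1)\bigl(n - (\ell+1)p\bigr) \ge 0$. This holds because $\ell \ge 1$ and $p \le (n-\ell)/(\ell+1) \le n/(\ell+1)$. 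Combining the two estimates gives the strict inequality~\eqref{eq:q_j_upper_bound}.
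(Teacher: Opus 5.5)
Your proof is correct and follows the paper's argument essentially step for step. It uses the same balanced solution with $k = \ell\lceil (n-p)/\ell\rceil - (n-p)$ and the same case split on whether $\ell$ divides $n-p$ for the strict lower bound. The upper bound also goes through $q_j \le \lceil (n-p)/\ell\rceil < (n-p)/\ell + 1$ in both, and your reduction to $(\ell-1)\bigl(n-(\ell+1)p\bigr)\ge 0$ is just an expanded form of the paper's identity $(n-p)/\ell = p + \bigl(\frac{n}{\ell+1}-p\bigr)\frac{\ell+1}{\ell}$ combined with $\frac{\ell+1}{\ell}\le 2$.

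In a write-up, drop the unconditional claim $\lceil N/\ell\rceil - 1 > N/\ell - 1$ from your second paragraph, since it fails when $\ell \mid N$. Your subsequent case analysis is what actually establishes the strict inequality.
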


\begin{proof}
Define
\[
    k \;=\; p - n + \ell \left\lceil \frac{n - p}{\ell} \right\rceil.
\]
Using the bounds $x \leq \lceil x \rceil < x+1$, we obtain $0 \leq k < \ell$,
with equality $k = 0$ if and only if $\ell$ divides $n - p$.
The condition $q_1 + \cdots + q_{\ell} + p = n$ is then satisfied by setting
\begin{align*}
    q_1 = \cdots = q_k &= \left\lceil \frac{n - p}{\ell} \right\rceil - 1,
    \\
    q_{k+1} = \cdots = q_{\ell} &= \left\lceil \frac{n - p}{\ell} \right\rceil.
\end{align*}

Let $j \in [\ell]$. We now derive lower and upper bounds on~$q_j$.

For the lower bound, the assumption on $p$ implies
\[
    \frac{n - p}{\ell} - 1
    \;\geq\;
    \frac{n+1}{\ell+1} - 1
    \;=\;
    \frac{n - \ell}{\ell+1}
    \;\geq\; p.
\]
If $\ell$ divides $n-p$, then $k=0$ and
\[
    q_j \;=\; \frac{n - p}{\ell} \;>\; p.
\]
Otherwise,
\[
    q_j \;\geq\; \left\lceil \frac{n - p}{\ell} \right\rceil - 1
    \;>\; \frac{n - p}{\ell} - 1
    \;\geq\; p.
\]
Hence, in all cases, $q_j > p$.

For the upper bound, observe that
\[
    \frac{n - p}{\ell}
    \;=\;
    p + \Bigl(\frac{n}{\ell+1} - p\Bigr)\,\frac{\ell+1}{\ell},
\]
and therefore
\[
    q_j
    \;\leq\;
    \left\lceil \frac{n - p}{\ell} \right\rceil
    \;<\;
    p + 2\left(\frac{n}{\ell+1} - p\right) + 1,
\]
which completes the proof.
\end{proof}
}

\end{document}